\g@addto@macro\bfseries{\boldmath}
\g@addto@macro\mdseries{\unboldmath}
\g@addto@macro\normalfont{\unboldmath}
\g@addto@macro\rmfamily{\unboldmath}
\g@addto@macro\upshape{\unboldmath}
\renewcommand*{\multicitedelim}{\addcomma\space}
\newcommand{\myhref}[1]{%
  \iffieldundef{doi}
    {\iffieldundef{url}
       {#1}
       {\href{\strfield{url}}{#1}}}
    {\href{http://dx.doi.org/\strfield{doi}}{#1}}%
}
    \newlength{\temp@x}%
    \newlength{\temp@y}%
    \newlength{\temp@w}%
    \newlength{\temp@h}%
    \def\my@coords#1#2#3#4{%
      \setlength{\temp@x}{#1}%
      \setlength{\temp@y}{#2}%
      \setlength{\temp@w}{#3}%
      \setlength{\temp@h}{#4}%
      \adjustlengths{}%
      \my@pdfliteral{\strip@pt\temp@x\space\strip@pt\temp@y\space\strip@pt\temp@w\space\strip@pt\temp@h\space re}}%
      \def\my@pdfliteral#1{\pdfliteral page{#1}}
      \def\adjustlengths{}%
      \def\my@pdfliteral #1{}
      \def\adjustlengths{\setlength{\temp@h}{-\temp@h}\addtolength{\temp@y}{1in}\addtolength{\temp@x}{-1in}}%
    \def\Hy@colorlink#1{%
      \begingroup
        \ifHy@ocgcolorlinks
          \def\Hy@ocgcolor{#1}%
          \my@pdfliteral{q}%
          \my@pdfliteral{7 Tr}
        \else
          \HyColor@UseColor#1%
        \fi
    }%
    \def\Hy@endcolorlink{%
      \ifHy@ocgcolorlinks%
        \my@pdfliteral{/OC/OCPrint BDC}%
        \my@coords{0pt}{0pt}{\pdfpagewidth}{\pdfpageheight}%
        \my@pdfliteral{F}
        %
        \my@pdfliteral{EMC/OC/OCView BDC}%
        \begingroup%
          \expandafter\HyColor@UseColor\Hy@ocgcolor%
          \my@coords{0pt}{0pt}{\pdfpagewidth}{\pdfpageheight}%
          \my@pdfliteral{F}
        \endgroup%
        \my@pdfliteral{EMC}%
        \my@pdfliteral{0 Tr}
        \my@pdfliteral{Q}%
      \fi
      \endgroup
    }%
\colorlet{DarkRed}{red!50!black}
\colorlet{DarkGreen}{green!50!black}
\colorlet{DarkBlue}{blue!50!black}
\declaretheorem[numberwithin=section]{theorem}
\declaretheorem[numberlike=theorem]{lemma}
\declaretheorem[numberlike=theorem]{claim}
\newcommand{\dist}{d}
\DeclareMathOperator*{\argmax}{arg\,max}
\title{Fully dynamic all-pairs shortest paths with worst-case update-time revisited\thanks{To be presented at the Symposium on Discrete Algorithms (SODA) 2017. This work was done in part while the authors were at Microsoft Research Silicon Valley Lab, Mountain View, USA.}}
\author{
Ittai Abraham\thanks{Hebrew University of Jerusalem, Israel.}
\and
Shiri Chechik\thanks{Tel Aviv University, Israel. This research was supported by the ISRAEL SCIENCE FOUNDATION (grant No. 1528/15).}
\and
Sebastian Krinninger\thanks{Max Planck Institute for Informatics, Saarland Informatics Campus, Germany. Work done in part while at the University of Vienna, Faculty of Computer Science, Austria, and while at the Simons Institute for the Theory of Computing, Berkeley, USA.}
}
\date{}
\begin{document}
\maketitle
\begin{abstract}
We revisit the classic problem of dynamically maintaining shortest paths between all pairs of nodes of a directed weighted graph.
The allowed updates are insertions and deletions of nodes and their incident edges.
We give worst-case guarantees on the time needed to process a single update (in contrast to related results, the update time is \emph{not} amortized over a sequence of updates).

Our main result is a simple randomized algorithm that for any parameter $c>1$ has a worst-case update time of $ O (cn^{2+\sfrac{2}{3}} \log^{\sfrac{4}{3}}{n}) $ and answers distance queries correctly with probability $1-1/n^c$, against an adaptive online adversary if the graph contains no negative cycle.
The best deterministic algorithm is by Thorup \citem[STOC 2005]{Thorup05} with a worst-case update time of $ \tilde O (n^{2+\sfrac{3}{4}})$ and assumes non-negative weights. This is the first improvement for this problem for more than a decade. 
Conceptually, our algorithm shows that randomization along with a more direct approach can provide better bounds. 

\end{abstract}
\newpage

\section{Introduction}

In the all-pairs shortest paths (APSP) problem we are interested in computing the distance matrix of a given graph.
In the \emph{fully dynamic} version of this problem the graph might undergo updates in the form of insertions and deletions of nodes and their incident edges.
The goal is to refresh the distance matrix after each such update as quickly as possible.
In particular we want algorithms that are more efficient than recomputing the distance matrix from scratch after every update in the graph.
The time needed to perform the operations for refreshing the matrix is called update time.
Our main result is a fully dynamic APSP algorithm for weighted directed graphs with $ n $ nodes. Our algorithm is randomized and answers queries correctly with \textit{high probability} (the probability of error is polynomially small ($n^{-c}$)) against an \textit{adaptive online adversary}~\cite{Ben-DavidBKTW94}.
This type of adversary cannot see the algorithm's random coins and internal data structure, but it may choose arbitrary updates and path queries based on all previous responses of the algorithm.
\begin{theorem}\label{thm:worst_case_randomized}
For every $ c \geq 1 $, there is a randomized fully dynamic algorithm for maintaining the distance matrix of a weighted directed graph containing no negative cycles that, with probability at least $ 1 - 1 / n^c $, has a worst-case update time of $ O (c n^{2 + \sfrac{2}{3}} \log^{\sfrac{4}{3}} {n}) $ and is correct against an adaptive online adversary.
\end{theorem}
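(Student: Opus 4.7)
The plan is to combine the classical random hitting set idea with the technique of spreading expensive computations uniformly across many updates, which converts an amortized bound into a worst-case one. Sample $\Theta(cn\log n/\ell)$ centers uniformly at random; with probability at least $1-n^{-c}$ every simple path of more than $\ell$ edges meets at least one center. A distance query $(u,w)$ can then be answered as the minimum of $\hat d(u,v)+\hat d(v,w)$ over all centers $v$, together with a separately maintained short-path distance covering paths of at most $\ell$ hops; the former quantities are read off in- and out-shortest-path trees rooted at each center.

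To obtain the worst-case guarantee I would let the algorithm run in \emph{phases} of $B$ updates. At the start of a phase the algorithm freezes the current graph, draws a fresh random set of centers, and schedules the $\Theta(h)$ full single-source shortest-path computations in the background, performing $O(hn^2/B)$ units of work on each of the $B$ subsequent updates. When the phase ends the new trees atomically replace the old ones. Since the trees describe only the graph at the phase boundary, an \emph{overlay} mechanism must account for the at most $B$ vertices touched by the intervening updates; these act as an additional layer of pivots in the query procedure, analogously to the centers.

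The adaptive-adversary guarantee rests on the observation that the random bits used to draw the current phase's centers are internal to the algorithm until the phase ends: an online adversary sees only the answers to previous queries and cannot have conditioned any update or query in the current phase on which vertices were sampled. The hitting-set property therefore applies freshly at every query, and a union bound over the $\mathrm{poly}(n)$ queries and phases preserves the $1-1/n^c$ error guarantee at the price of a constant-factor increase in the sampling rate.

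The main obstacle will be the short-path substructure: it must return, in strict worst-case time per update, distances restricted to paths of at most $\ell$ hops, and it must do so alongside the overlay without blowing the budget. Balancing the three cost components - the background SSSP work of $O(hn^2/B)$ per update, the short-path maintenance cost, and the overlay query cost - with the parameters $h\approx n\log n/\ell$ and $B$ points (by a back-of-the-envelope calculation) to a regime roughly of the form $\ell\approx n^{2/3}$, $h\approx n^{1/3}\log n$, which explains the shape of the advertised $O(cn^{2+2/3}\log^{4/3}n)$ bound. Getting all three components to fit simultaneously within this budget, while preserving the hitting-set guarantee against an \emph{adaptive} online adversary, is where the technical work lies; the SSSP-scheduling and union-bound parts of the argument are already routine in the style of Thorup.
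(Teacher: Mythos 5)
Your sketch reproduces the outer shell of the argument -- random centers that hit all long paths, phases of $B$ updates over which an expensive rebuild is spread, recently touched vertices handled as extra pivots, and a union bound for the error probability -- but this shell is essentially the known reduction of Henzinger--King/Thorup (Lemma~\ref{lem:decremental_to_fully_dynamic} in the paper), and it is not where the difficulty lies. The gap is exactly the piece you defer: a structure that, in \emph{worst-case} time per update, maintains the $\leq\ell$-hop distances (and the center trees) under the \emph{deletions} that occur inside a phase. Your overlay of ``touched vertices as additional pivots'' is sound only for insertions: an inserted node can be folded in by Floyd--Warshall-type iterations, but a deleted node invalidates the frozen in/out shortest-path trees rooted at the centers, and the stale tree distances can be strictly smaller than the true ones; no amount of extra pivots repairs an underestimate. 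The paper's main contribution is precisely this missing component: during preprocessing it visits nodes in a congestion-balancing order (alternating the center and the non-center of maximum congestion, analyzed via the Levcopoulos--Overmars pile lemma) so that every node lies on only $O(h_i n\log n)$ stored $\leq h_i$-hop paths; after a batch $D$ of deletions it rebuilds, for each visited node $v$, a sparse sketch graph $H_i^v$ (all incident edges for affected nodes, only the old tree edges for unaffected ones) and re-runs Dijkstra there, over a hierarchy of layers $h_i=2^i$, giving the batch-deletion time $\tilde O(n^{2.5}\sqrt{|D|})$ that makes $\Delta=\lceil n^{1/3}\log^{2/3}n\rceil$ yield the claimed $O(cn^{2+\sfrac{2}{3}}\log^{\sfrac{4}{3}}n)$. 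Without some substitute for this, the three budget components you list cannot be balanced, because the ``short-path maintenance cost'' has no bound at all.

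Two smaller points. Your parameter guess is also off relative to what the reduction forces: the hop threshold ends up around $n^{1/3}$ with $\tilde O(n^{2/3})$ long-range centers from which full SSSP is recomputed at \emph{every} update (cost $\tilde O(n^{8/3})$), not $\ell\approx n^{2/3}$ with $\tilde O(n^{1/3})$ background-scheduled trees. And your adaptivity argument (``the random bits are internal until the phase ends'') is not sufficient as stated: answers within a phase are computed using the current centers, so information could in principle leak to the adversary mid-phase. The correct reason, which the paper uses, is that the algorithm outputs exact distances, which are uniquely determined by the graph; hence the answers reveal nothing about the random choices, and the adversary is effectively oblivious. This distinction becomes substantive as soon as one returns paths rather than distances, which is why the paper needs lexicographic shortest paths (and an extra $\log n$) for that extension.
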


In the past years (see \Cref{sec:related_work}) there has been significant progress on dynamic shortest path problems. However, obtaining worst-case bounds for the most general setting of arbitrary graphs with weights seems challenging. Despite considerable recent attention to dynamic graph problems, the only result in this model is the deterministic algorithm of
Thorup~\cite{Thorup05} from STOC 2005 that obtained worst-case update time of $ \tilde O (n^{2+\sfrac{3}{4}}) $ for non-negative weights.\footnote{We use $\tilde{O}(\cdot)$ to hide polylogarithmic factors in $ n $.}

We present the first solution that takes advantage of randomization to improve the worst-case update time from $ \tilde O (n^{2+\sfrac{3}{4}}) $ to $ \tilde O (n^{2+\sfrac{2}{3}}) $ and extend it to negative weights. We believe this trade-off (significantly better update time at the cost of negligible  ($n^{-c}$) probability of being wrong) is an important point on the design space of fully dynamic shortest path algorithms.
In addition, our solution is arguably simpler than the one of Thorup~\cite{Thorup05}.
The algorithm of Thorup~\cite{Thorup05} relies on the algorithm of Demetrescu and Italiano~\cite{DemetrescuI04} and it is essentially
a sophisticated de-amortization of Demetrescu and Italiano~\cite{DemetrescuI04}.
In contrast, both our algorithm and analysis are pretty simple and independent of other sophisticated algorithms. 
Although our algorithm is not deterministic, its guarantees are stronger than those of many other randomized shortest paths algorithms.
Namely, it supports updates by an adaptive online adversary and not just an oblivious adversary.
This means that the adversary is allowed to base its updates on to the shortest paths previously returned by the algorithm (but does not directly see the algorithm's internal randomness).
The weaker oblivious adversary has to fix its sequence of updates before the algorithm starts.

Our algorithms compute the distance matrix explicitly after every update.
In general, this is not required for a fully dynamic algorithm as long as it is able to answer queries for the distances between nodes after each update step.
The time needed to perform a single query is called query time.
Our algorithms have constant query time when asked for the distance between two nodes.
They can easily be extended to also output the shortest path connecting two nodes in time proportional to the length of the path at the cost of an additional factor of $ \log{n} $ in the update time (see \Cref{sec:returning shortest paths}).
Restricting the allowed updates to insertions and deletions of nodes is no loss of generality:
insertions and deletions of edges as well as edge weight increases can be simulated by at most two node updates.

\subsection{Additional results}

We believe that Sankowski's framework for maintaining the matrix adjoint~\cite{Sankowski04,Sankowski05} gives a randomized fully dynamic algorithm for maintaining the distance matrix of an \emph{unweighted} directed graph with a worst-case update time of $ \tilde O (n^{2+{1/2}}) $. However it seems that this extension is inherently limited to maintaining distances and cannot efficiently be extended to output also the shortest path connecting two nodes in time proportional to the length of the path.

In Section \ref{appendix:unweighted} we resolve this shortcoming. We show how to extend our scheme to \emph{unweighted} graphs with a worst-case update time of $ \tilde O (n^{2+{1/2}}) $ and allow to output also the shortest path connecting two nodes in time proportional to the length of the path (see also \Cref{sec:returning shortest paths}).

Finally in Section \ref{appendix:deterministic} we show that our scheme can be extended to a \emph{deterministic} version with \emph{negative weights} and obtain a worst-case update time of $ O (n^{2+\sfrac{3}{4}} \log^{2/3} n)$. This result improves on the best known deterministic result by reducing the logarithmic factors. 

\subsection{Recomputing from scratch}

An alternative approach is to recompute all-pairs shortest paths from scratch on each update. Fully dynamic algorithms (like ours and Thorup's) improve on this approach when the edge weights can be relatively large  or when the graph is unweighted (like ours and Sankowski's).
In the static setting, Zwick's pseudopolynomial all-pairs shortest paths algorithm~\cite{Zwick02} has a running time of $ O (n^{2.5302}) $~\cite{Gall12} if its input is a directed graph with integer edge weights from $ \{ -W, \ldots, 0, \ldots, W \} $ such that $ W \leq n^{3-\omega} $.
However, Zwick's algorithm achieves its superior running time in this regime by using a fast rectangular matrix multiplication algorithm as a subroutine.
Large constants in the running times of these algorithms make it worthwhile to find solutions that do not rely on fast matrix multiplication as a subroutine.
For arbitrary edge weights, the current fastest algorithm has a running time of $ O (n^3 / 2^{\Omega(\log^{\sfrac{1}{2}}{n})}) $~\cite{Williams14,ChanW16}.

\subsection{Related work}\label{sec:related_work}

Unless noted otherwise, the algorithms cited in this section are deterministic and allow insertions and deletions of nodes and their incident edges in directed graphs with non-negative edge weights.

\paragraph{Fully dynamic algorithms.}
The study of fully dynamic APSP algorithms for general directed graphs was initiated by King~\cite{King99}.
She obtained a fully dynamic algorithm with a pseudopolynomial amortized update time of $ O (n^{2+\sfrac{1}{2}} \sqrt{W \log{n}}) $, where the edge weights have to be positive integers and $ W $ is the largest among the weights.
She also presented $ (1 + \epsilon) $- and $ (2 + \epsilon) $-approximations (for any positive constant $ \epsilon $) with amortized update times of $ O (n^2 \log^3{(W n)} / \epsilon^2) $ and $ O (n^2 \log^2{n} / \log{\log{n}}) $, respectively.

Later, Demetrescu and Italiano~\cite{DemetrescuI06} obtained an algorithm allowing arbitrary edge weight updates with an amortized update time of $ O (n^{2+\sfrac{1}{2}} \sqrt{S \log^3{n}}) $, where each edge can assume at most $ S $ different real values.
Using a new framework for exploiting local properties of shortest paths Demetrescu and Italiano~\cite{DemetrescuI04} obtained an algorithm with an amortized update time of $ O (n^2 \log^3{n}) $.
This result is essentially optimal (up to logarithmic factors) if we demand that the algorithm has to maintain the distance matrix explicitly.
Thorup~\cite{Thorup04} slightly improved this update time to $ O (n^2 (\log{n} + \log^2{((n+m)/n)})) $, thus subsuming previous results on this problem in terms of running time.
Subsequently, Thorup~\cite{Thorup05} developed an algorithm with a worst-case update time of $ \tilde O (n^{2+\sfrac{3}{4}}) $ by deamortizing~\cite{DemetrescuI06}.
There are also two randomized algorithms for unweighted directed graphs with non-constant query time.
The first one by Roditty and Zwick has an amortized update time of $ \tilde O (m \sqrt{n}) $ and a query time of $ O (n^{\sfrac{3}{4}}) $.
The second one by Sankowski~\cite{Sankowski05} uses fast matrix multiplication as a subroutine and has a worst-case update time of $ O (n^{1.932}) $ and a query time of $ O (n^{1.288}) $.

Further results include fully dynamic algorithms for planar~\cite{KleinS98,HenzingerKRS97,FakcharoenpholR06,AbrahamCG12} and undirected graphs~\cite{RodittyZ12,Bernstein09,Bernstein16,AbrahamCT14}.

\paragraph{Partially dynamic algorithms.}
In the partially dynamic model only one type of updates is allowed.
The \emph{incremental} model is restricted to insertions and the \emph{decremental} model is restricted to deletions.
The amortized update time bounds of partially dynamic algorithms in all known algorithms do not depend on the number of updates performed.
Thus it is often convenient to report the \emph{total update time}, which is the sum of the individual update times.

A simple incremental algorithm for inserting a single node can be obtained by running one iteration of the Floyd-Warshall algorithm, which takes time $ O(n^2) $, or, total update time $ O(n^3) $ for inserting $ n $ nodes.
Ausiello et al.~\cite{AusielloIMN91} presented an incremental algorithm for inserting edges or decreasing edge weights in integer weighted graphs with a total update time of $ O (n^3 W \log{(nW)}) $, where $ W $ is the largest edge weight.

In terms of node deletions, the algorithm of Demetrescu and Italiano~\cite{DemetrescuI04} has a total update time of $ O (n^3 \log{n}) $.
The fastest decremental algorithms for edge deletions have total update times of randomized $ O (n^3 \log^2{n}) $ in unweighted graphs~\cite{BaswanaHS07}, or $ O (n^3 S \log^3{n}) $ in weighted graphs~\cite{DemetrescuI06}, where $ S $ is the number of different values each edge assumes (edge deletions are implemented by setting the weight of the edge to $ \infty $).
If we allow approximate answers, the state of the art is a randomized algorithm by Bernstein~\cite{Bernstein16}: in directed graphs with integer edge weights decremental $ (1 + \epsilon) $-approximate APSP can be maintained with a total update time of $ \tilde O (m n \log{W} / \epsilon) $, where $ W $ is the largest edge weight.\footnote{Equivalently, if the graph has rational edge weights, the total update time is $ \tilde O (m n \log{R} / \epsilon) $, where $ R $ is the ratio of the largest to the smallest edge weight.}
Related work includes various approximation algorithms for undirected graphs~\cite{BaswanaHS03,RodittyZ12,BernsteinR11,HenzingerKNSICOMP16,AbrahamC13,HenzingerKNSODA14,AbrahamCT14}, in particular also for the single-source shortest paths problem~\cite{EvenS81,BernsteinR11,HenzingerKNICALP13,HenzingerKNSODA14,HenzingerKNSTOC14,BernsteinC16}.

\section{Preliminaries}
In the rest of this paper we consider a weighted directed graph $ G $ undergoing insertions and deletions of nodes and their incident edges.
At every insertion of a node its incoming and outgoing edges and their respective weights are specified.
We define $ V $ and $ E $ to be the sets of nodes and edges of $ G $, respectively.
We set $ n = |V| $ and $ m = |E| $.
Given a subset $ S \subseteq V $ of nodes we denote by $ G \setminus S $ the subgraph of $ G $ induced by $ V \setminus S $.
The weight of an edge $ (u, v) \in E $ is denoted by $ w (u, v) $.
We define the length of a path to be the sum of the weight of its edges.
The shortest path from $ s $ to~$ t $ is the minimum length path from $ s $ to~$ t $ (or $ \infty $ if no such path exists).
The distance from $ s $ to~$ t $ in a graph $ G $ is the length of the shortest path from $ s $ to~$ t $ and is denoted by $ \dist_G (s, t) $.
A $ \leq h $ hop path is a path consisting of at most $ h $ edges.
The shortest $ \leq h $ hop path from $ s $ to $ t $ is a path with minimum length among all $ \leq h $ hop paths from $ s $ to $ t $.
We denote by $ \dist_G^h (s, t) $ the length of the shortest $ \leq h $ hop path from $ s $ to $ t $ (or $ \infty $ if no such path exists).
Note that shortest $ \leq h $ hop paths may be different from shortest paths in the case where the shortest paths contain more than $h$ edges.

In our algorithm we use the well-known fact that good hitting sets can be obtained by random sampling.
This technique was first used in the context of shortest paths by Ullman and Yannakakis~\cite{UllmanY91}.
A general lemma can be formulated as follows.

\begin{lemma}\label{lem:random hitting set}
Let $ a \geq 1 $, let $ T $ be a set of size $ t $ and let $ S_1, S_2, \ldots, S_k $ be subsets of $ T $ of size at least $ q $.
Let~$ U $ be a subset of $ T $ that was obtained by choosing each element of $ T $ independently with probability $ p = \min (x / q, 1) $ where $ x = a \ln{(k t)} + 1 $.
Then, with high probability (whp), i.e., probability at least $ 1 - 1/t^a $, the following two properties hold:
\begin{enumerate}
\item For every $ 1 \leq i \leq k $, the set $ S_i $ contains a node of $ U $, i.e., $ S_i \cap U \neq \emptyset $.
\item $ |U| \leq 3 x t / q = O (a t \ln{(k t)} / q) $.
\end{enumerate}
\end{lemma}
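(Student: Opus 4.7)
The plan is a standard union bound plus Chernoff argument, with the only care being to choose the constants so that $x = a \ln(kt) + 1$ absorbs all overhead. First I would dispose of the trivial case $p = 1$: here $U = T$, so every $S_i$ is hit, and $|U| = t \leq 3xt/q$ since $p = 1$ corresponds to $x \geq q$. Hence I may assume $p = x/q$ for the rest of the proof.

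For property~1, I fix an index $i$. Since the elements of $T$ are placed in $U$ independently with probability $p$ and $|S_i| \geq q$, we have
\[
\Pr[S_i \cap U = \emptyset] = (1-p)^{|S_i|} \leq (1-p)^q \leq e^{-pq} = e^{-x},
\]
so a union bound over the $k$ sets gives $\Pr[\exists\, i : S_i \cap U = \emptyset] \leq k e^{-x}$.

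For property~2, since the $S_i$'s are subsets of $T$ of size at least $q$ we have $t \geq q$, so $\mu := \mathbb{E}[|U|] = pt = xt/q \geq x$. A multiplicative Chernoff bound with deviation factor $1+\delta = 3$, which yields the exponent $-\mu\delta^2/(2+\delta) = -\mu$, gives $\Pr[|U| > 3\mu] \leq e^{-\mu} \leq e^{-x}$. Combining both failure events via one more union bound and substituting $x = a\ln(kt) + 1$,
\[
\Pr[\text{some property fails}] \leq (k+1)\, e^{-x} = \frac{k+1}{e\,(kt)^a} \leq \frac{1}{t^a},
\]
where the last inequality uses $a \geq 1$ and $k \geq 1$ to bound $(k+1)/(e k^a) \leq 1$. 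There is no real obstacle here; the only mild subtlety is that the ``$+1$'' in the definition of $x$ is precisely what lets one swallow the union-bound factor of $k+1$ cleanly and recover exactly $1/t^a$ rather than a constant multiple of it.
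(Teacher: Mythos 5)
Your proof is correct: the trivial-case split at $p=1$, the $(1-p)^q \le e^{-x}$ bound with a union bound over the $k$ sets, the Chernoff bound $\Pr[|U| > 3\mu] \le e^{-\mu} \le e^{-x}$ (using $\mu = xt/q \ge x$ since $t \ge q$), and the final computation $(k+1)e^{-x} = (k+1)/(e(kt)^a) \le 1/t^a$ for $a \ge 1$, $k \ge 1$ all check out. The paper states this lemma as a well-known fact without giving a proof, and your argument is exactly the standard sampling-plus-Chernoff-plus-union-bound derivation it implicitly relies on, including the observation that the ``$+1$'' in $x$ absorbs the union-bound overhead.
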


A second ingredient of our algorithm is a simple technique for handling adversarial distribution of loads.
The lemma below was observed by Levcopoulos and Overmars~\cite{LevcopoulosO88} in the context of balanced search trees.
Thorup later applied it to bounding the number of precomputed paths through certain nodes in his fully dynamic APSP algorithm~\cite{Thorup05}.

\begin{lemma}[\cite{LevcopoulosO88}]\label{lem:distributing stones on piles}
Consider the following process for repeatedly distributing $ L $ stones on $ k $ piles.
In each round, remove the pile with the maximum number of stones (together with the stones it contains) and let an adversary distribute a total of at most $ L $ stones on the remaining piles.
Then, at any time, the maximum number of stones on any pile is $ O (L \log{k}) $.
\end{lemma}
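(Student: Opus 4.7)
The plan is to control the evolution of loads via an exponential potential argument. I use $\Phi^{(t)} := \sum_p e^{\alpha \ell_p^{(t)}}$, where $\ell_p^{(t)}$ is the number of stones on pile $p$ after $t$ rounds and $\alpha$ is a parameter of order $1/L$. The point of this choice is that $\Phi^{(t)} \ge e^{\alpha M_t}$ for the current maximum load $M_t$, so any uniform bound on $\Phi$ translates, after a logarithm, into a bound on $M_t$.

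The per-round change decomposes into two pieces. Removing the max pile (of load $M_t$) decreases $\Phi$ by $e^{\alpha M_t} - 1$; the adversary's distribution of at most $L$ stones among the remaining piles increases $\Phi$ by at most $e^{\alpha M_t}(e^{\alpha L} - 1)$, the bound coming from convexity of $e^{\alpha x}$: since the function is convex in each $\delta_p$ and $\sum \delta_p \le L$, the worst strategy is to dump all $L$ stones on the single pile of largest remaining load (at most $M_t$). Setting $\alpha := \ln(3/2)/L$ gives $e^{\alpha L} - 2 = -\tfrac{1}{2}$, so the two pieces combine to $\Delta\Phi \le 1 - \tfrac{1}{2} e^{\alpha M_t}$. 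Using $e^{\alpha M_t} \ge \Phi^{(t)}/k$ (the largest of $k$ summands dominates the average) yields the contracting recursion
\[
\Phi^{(t+1)} \le \Phi^{(t)}\!\left(1 - \tfrac{1}{2k}\right) + 1.
\]
Starting from $\Phi^{(0)} = k$, a one-line induction establishes the invariant $\Phi^{(t)} \le 2k$ for all $t$, and taking logarithms gives $M_t \le \alpha^{-1}\ln(2k) = O(L\log k)$ as claimed.

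The main obstacle is calibrating $\alpha$: it must be small enough that $e^{\alpha L} < 2$, since otherwise the adversary's contribution to $\Phi$ can dominate the reset and $\Phi$ grows without bound, yet not so small that $\alpha^{-1}\log k$ exceeds $O(L\log k)$. This confines $\alpha L$ to the open window $(0,\ln 2)$; within this window any concrete choice works, and I pick $\alpha L = \ln(3/2)$ for a round constant. A secondary subtlety is that the convexity bound on the adversary's move is tight precisely when the second-largest pile ties the largest, so no obvious slack is available to sharpen the constant through this argument alone.
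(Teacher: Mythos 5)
Your potential argument is correct: the convexity step (the adversary's worst case is to dump all $L$ stones on the heaviest remaining pile, whose load is at most $M_t$), the choice $\alpha L = \ln(3/2)$ making the net drift $\Delta\Phi \le 1 - \tfrac12 e^{\alpha M_t}$, the averaging bound $e^{\alpha M_t} \ge \Phi^{(t)}/k$, and the induction $\Phi^{(t)} \le 2k$ all check out, yielding $M_t \le L\ln(2k)/\ln(3/2) = O(L\log k)$. (Minor remark: since the removed pile disappears rather than being reset to empty, the removal actually decreases $\Phi$ by $e^{\alpha M_t}$, not $e^{\alpha M_t}-1$; your weaker bound is conservative and harmless, and the at-most-$k$-piles bound still gives $e^{\alpha M_t}\ge \Phi^{(t)}/k$.) Note, however, that the paper does not prove this lemma at all -- it imports it from Levcopoulos and Overmars -- so there is no in-paper proof to match. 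The classical argument there is purely combinatorial: one tracks $S_j$, the supremum over time of the sum of the $j$ largest piles, observes that removing the maximum and adding at most $L$ stones gives the recurrence $S_j \le \tfrac{j}{j+1}S_{j+1} + L$ together with a bound of $O(kL)$ on the total, and unrolls it to get $S_1 = O(L\log k)$ with a constant close to $L\ln k$. Your exponential-potential route is a genuinely different, slightly more generic derivation (it would tolerate, e.g., removing only an approximately maximal pile) at the cost of a worse constant, roughly $2.5\,L\ln(2k)$; either suffices for the $O(h_i n \log n)$ congestion bound the paper needs.
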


Our last ingredient is a reduction for obtaining a fully dynamic algorithm from a decremental algorithm. This approach was first taken by Henzinger and King for the dynamic minimum spanning tree (MST) problem \cite{HenzingerK01} and a later variant was used by Thorup~\cite{Thorup05}.
Consider a decremental algorithm that, after a preprocessing stage, can handle a single batch of up to $ 2 \Delta $ deletions. First we reduce the cost of the pre-processing by a factor of $\Delta$ at the cost of supporting at most $\Delta$ deletions (instead of $2\Delta$). This is done by keeping two copies: at every interval of $\Delta$ operations, one copy is used to answer queries and the other copy is being gradually built (in each round a $1/\Delta$ fraction of the pre-computation is executed). When the gradually built copy is ready to serve queries, it is $\Delta$ rounds behind, but this is okay because it can handle $2\Delta$ deletions and we simply add the at most $\Delta$ deletions of the previous interval to the at most $\Delta$ deletions on the current interval.

Second, reducing this decremental-only algorithm to a fully dynamic one at the cost of $O(\Delta n^2)$ time per update is done by running a modification of the Floyd-Warshall algorithm.
Recall that standard Floyd-Warshall algorithm operates in iterations.
In each iteration the algorithm selects a new vertex and updates the shortest paths of all pairs by allowing them to use the new selected node together will all previously selected nodes.
Notice that for our needs, this means that we can start from the result of the decremental algorithm and do only $ 2 \Delta $ iterations, one for each new inserted node.
Each iteration takes time $ O (n^2) $ so the entire process takes time $ O (\Delta n^2) $.
Note that we do this operation from scratch after every update. That is, after every update (deletion or insertion) the algorithms invokes these
$O(\Delta)$ Floyd-Warshall iterations. Therefore, we only need to maintain the decremental data structure dynamically.
In \Cref{sec:negative_weights} we extend this reduction such that the graph may have negative weights, but no negative cycles, while the decremental algorithm only needs to work with non-negative edge weights.

\begin{lemma}[\cite{HenzingerK01},\cite{Thorup05}]\label{lem:decremental_to_fully_dynamic}
If there is a decremental APSP algorithm supporting any sequence of up to $ 2 \Delta $ deletions that spends time $ t_\text{pre} $ for preprocessing the initial graph and worst-case time $ t_\text{del} $ per deletion, then there is also a fully dynamic APSP algorithm with a worst-case update time of $ O(t_\text{pre} / \Delta + t_\text{del} + \Delta n^2) $.
The query time of the fully dynamic algorithm is proportional to the query time of the decremental algorithm.
\end{lemma}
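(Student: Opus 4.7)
\medskip
\noindent\textbf{Proof plan.} The plan is to implement the two-phase reduction sketched in the preceding paragraphs, carefully arguing that every piece of work can be charged as \emph{worst-case} per-update cost rather than amortized.

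First I would maintain two decremental instances $D_{0}$ and $D_{1}$ run in a ping-pong fashion over intervals of $\Delta$ consecutive updates. In each interval exactly one instance is \emph{live} (serving queries), while the other is being rebuilt from scratch on a snapshot of the current graph, with its preprocessing cost $t_{\text{pre}}$ spread evenly over the $\Delta$ updates of the interval — i.e.\ $t_{\text{pre}}/\Delta$ work per update. At the next interval boundary the two roles swap. Because an instance, upon becoming live, was preprocessed on a graph that is already $\Delta$ updates stale and must then serve for another $\Delta$ updates before being swapped out, the number of deletions it must absorb during its entire active lifetime is at most $2\Delta$, which is exactly its budget. Each deletion is fed to both the live and the background instance at cost $O(t_{\text{del}})$ per update; edge deletions and weight increases are encoded as a constant number of node reinsertions, so restricting the decremental interface to node deletions is without loss of generality.

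Next I would show how to incorporate insertions, which the decremental structure cannot handle directly. Let $I$ be the set of nodes inserted since the live instance was last preprocessed; by construction $|I| = O(\Delta)$. After feeding the live instance the node deletions for the nodes removed since its preprocessing, its stored matrix reports the distances in the current graph restricted to $V \setminus I$. To upgrade this to the full current graph I would run $|I|$ iterations of Floyd--Warshall, one per node of $I$: each iteration picks a node $v \in I$ and, in $O(n^{2})$ time, replaces every entry $\dist(s,t)$ by $\min\bigl(\dist(s,t),\, \dist(s,v) + \dist(v,t)\bigr)$. After all iterations the matrix correctly reflects shortest paths that are allowed to use any subset of the newly inserted nodes as intermediates, which combined with the restricted matrix from the live instance yields the full distance matrix. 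This whole refresh is redone from scratch after every update, so the matrix always reflects the current graph; queries then cost only as much as in the underlying decremental algorithm.

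Summing the three contributions — $t_{\text{pre}}/\Delta$ for the background preprocessing, $O(t_{\text{del}})$ for piping each deletion into both instances, and $O(\Delta n^{2})$ for the Floyd--Warshall refresh — yields the claimed worst-case update time. I expect the main subtlety to be bookkeeping the ping-pong schedule and verifying that the $2\Delta$ deletion budget really does suffice: this is exactly what forces the asymmetric ``build while the partner is live'' interleaving, and it is the only place where the precise value $2\Delta$ (rather than $\Delta$) is used.
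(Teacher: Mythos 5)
Your plan is essentially the paper's own reduction: two copies alternating over intervals of $\Delta$ updates, the background copy's preprocessing spread at $t_{\text{pre}}/\Delta$ work per update, the $2\Delta$ deletion budget covering the $\Delta$-update staleness of the freshly built copy plus the current interval, and an $O(\Delta n^2)$ refresh after every update via one Floyd--Warshall-style iteration per node inserted since the live copy's snapshot. The one imprecision is your explicit update rule: the iteration for an inserted node $v$ must first recompute $v$'s row and column (e.g.\ $\dist(s,v) \gets \min_x \bigl(\dist(s,x) + w(x,v)\bigr)$ and symmetrically, still $O(n^2)$ per node), because relaxing only $\dist(s,t) \gets \min\bigl(\dist(s,t), \dist(s,v)+\dist(v,t)\bigr)$ leaves $\dist(s,v)$ at its direct-edge initialization and misses paths that enter or leave $v$ through old nodes, which is what the paper means by letting all pairs ``use the new selected node'' in each iteration.
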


\section{Decremental shortest paths for a batch of deletions}\label{sec:randomized algorithm}

In this section we design a randomized algorithm with the following properties.
We are given a directed graph with non-negative edge weights and preprocess it in time $ \tilde O (n^3) $.
After the preprocessing phase, a batch of nodes $ D $ to be deleted from the graph is given to us and we have to compute the all-pairs distances in $ G \setminus D $.
We will perform this task in time $ \tilde O (n^{2.5} \sqrt{|D|}) $.

\begin{theorem}[Batch deletion algorithm]
Given a graph $ G = (V, E) $ and a parameter $ c \geq 1 $, \Cref{alg:preprocessing} computes a data structure $ \mathcal{D} $ in time $ O (n^3 \log^2{n}) $ such that given $ \mathcal{D} $ and a single set of nodes $ D \subseteq V $, \Cref{alg:deletion} computes the all-pairs distances of $ G \setminus D $ in time $ O (n^{2.5} \sqrt{|D|} \log{n}) $.
The running time bounds and the correctness each hold with probability at least $ 1 - 1 / n^c $.
\end{theorem}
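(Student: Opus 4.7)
The plan is to combine a preprocessed hop-doubling structure in $G$ with a random hitting set sampled freshly at batch-deletion time. The hop threshold that balances the work will be $h = \lceil \sqrt{n/|D|} \rceil$, chosen once $|D|$ is revealed.

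In preprocessing I would compute, for each $i = 0, 1, \ldots, \lceil \log n \rceil$, the all-pairs shortest $\leq 2^i$-hop distance matrix $\dist^{2^i}_G$ by $\log n$ successive $O(n^3)$ distance products. For every pair $(u,v)$ I would store a canonical shortest $\leq 2^i$-hop path $\pi^{2^i}_{u,v}$, chosen so that every vertex of $V$ lies on at most $\tilde O(n \cdot 2^i)$ canonical paths at level~$i$. This load-balanced choice is where I would invoke \Cref{lem:distributing stones on piles}: whenever a vertex exceeds the target load, reroute to an alternate shortest path of the same length avoiding it. I would also index, for every vertex $v$ and every level $i$, the list of canonical paths through $v$, which enables fast enumeration of ``affected'' pairs later. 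The whole preprocessing fits in $O(n^3 \log^2 n)$ time: one $\log n$ factor from hop doubling and one from balancing.

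At batch-deletion time, after fixing $h$, I would sample $H \subseteq V$ by selecting each vertex independently with probability $\Theta(c \ln n / h)$, so that by \Cref{lem:random hitting set}, $|H| = \tilde O(\sqrt{n |D|})$ and $H$ hits every fixed path of $\geq h$ vertices whp. The deletion algorithm has two subroutines. \emph{Long paths:} run forward and backward Dijkstra from each $w \in H \setminus D$ in $G \setminus D$, costing $\tilde O(|H| n^2) = \tilde O(n^{2.5}\sqrt{|D|})$. \emph{Short paths:} process levels $i = 1, \ldots, \lceil \log h \rceil$ in order, enumerating the pairs \emph{affected at level $i$}, namely those with $\pi^{2^i}_{u,v} \cap D \neq \emptyset$, via the stored lists; for each such pair recompute
\[
\dist^{2^i}_{G \setminus D}(u,v) = \min_{w \in V \setminus D} \bigl( \dist^{2^{i-1}}_{G \setminus D}(u,w) + \dist^{2^{i-1}}_{G \setminus D}(w,v) \bigr),
\]
using $\dist^{2^{i-1}}_G$ for unaffected level-$(i{-}1)$ pairs (the surviving canonical path certifies $\dist^{2^{i-1}}_{G \setminus D} = \dist^{2^{i-1}}_G$ there) and the previously recomputed table for affected ones. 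Since the balancing gives $|A_i| \leq \tilde O(|D| \cdot n \cdot 2^i)$ and each min takes $O(n)$ time, the short-path work is $\tilde O(|D| n^2 h) = \tilde O(n^{2.5}\sqrt{|D|})$. Finally I would output $\min \bigl( \dist^{h}_{G \setminus D}(u,v),\ \min_{w \in H \setminus D} \dist_{G \setminus D}(u,w) + \dist_{G \setminus D}(w,v) \bigr)$ for each pair.

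Correctness splits into two cases. If the true shortest path in $G \setminus D$ uses $\leq h$ edges it is exactly captured by the hop-doubling table. If it uses $> h$ edges it contains at least one vertex of $H \setminus D$ whp; crucially, $H$ is sampled \emph{after} the current batch's $D$, so it is independent of $D$ even against an adaptive online adversary (which sees only the algorithm's outputs, never $H$), and a union bound over the $n^2$ pair-queries yields probability $\geq 1 - 1/n^c$. The main obstacle I anticipate is the balanced canonical-path construction: without it, one vertex could carry $\Omega(n^2)$ canonical paths, breaking the $|A_i|$ bound that drives the short-path budget. Establishing this balance at all $\log n$ hop levels simultaneously within the $O(n^3 \log^2 n)$ preprocessing budget, via \Cref{lem:distributing stones on piles}, is the most delicate ingredient.
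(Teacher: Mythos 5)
There is a genuine gap, and it is exactly at the point you yourself flag as delicate: the balanced canonical-path construction does not exist in general. You want, for every level $i$, a canonical shortest $\leq 2^i$-hop path for \emph{all} $n^2$ pairs such that every vertex lies on at most $\tilde O(n\cdot 2^i)$ of them, and you propose to enforce this by ``rerouting'' via \Cref{lem:distributing stones on piles}. But the load on a vertex is not under the algorithm's control: a single vertex can be unavoidable on $\Omega(n^2)$ shortest paths (take a hub $u$ with edges from $n/2$ sources and to $n/2$ sinks; every source--sink pair's unique shortest path passes through $u$ at every level $i\geq 1$). There is no alternate shortest path to reroute to, and \Cref{lem:distributing stones on piles} does not apply, since that lemma crucially requires that the overloaded pile be \emph{removed}, i.e.\ that subsequent paths avoid the overloaded vertex --- which is impossible when the pairs' shortest paths must use it. Consequently your bound $|A_i|=\tilde O(|D|\, n\, 2^i)$ fails: deleting that one hub ($|D|=1$, $h=\sqrt n$) makes $|A_i|=\Theta(n^2)$ at every level, and your $O(n)$-per-pair hop-doubling recomputation then costs $\Theta(n^3\log n)$, far above the claimed $O(n^{2.5}\sqrt{|D|}\log n)$. (The long-path part of your argument, with the hitting set sampled at deletion time and Dijkstra from each sampled center, is fine and matches the paper's cost.)

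The paper evades this obstruction by never storing paths between all pairs. At level $i$ it maintains shortest $\leq h_i$-hop paths only to and from a set $R_i$ of $\tilde O(n/h_i)$ ``visited'' nodes (sampled centers interleaved with the currently most congested non-centers), and each Bellman--Ford computation is run in a graph from which all previously visited nodes' edges have been deleted. This removal is precisely what makes \Cref{lem:distributing stones on piles} applicable, yielding per-vertex congestion $O(h_i n\log n)$; after a batch $D$ the number of destroyed (node, hub) paths is only $\tilde O(|D| h_i n)$, and these are repaired by Dijkstra on sparse sketch graphs around each hub, with all-pairs answers recovered as minima of through-hub distances (using the presorted lists $L_i(s,t)$). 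If you want to salvage your all-pairs hop-doubling scheme, you would need a fundamentally different way to bound the recomputation work for affected pairs; as written, the short-path half of your argument cannot meet the stated time bound.
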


Using the reduction of \Cref{lem:decremental_to_fully_dynamic} this immediately implies our main result (\Cref{thm:worst_case_randomized}) by setting $ \Delta = \lceil n^{1/3} \log^{2/3}{n} \rceil $.
We now describe the decremental algorithm and then analyze its correctness and its running time.

\subsection{Algorithm description}

Our algorithm follows a hierarchical approach where, for every $ 1 \leq i \leq \lceil \log{n} \rceil $, layer $ i $ is used to obtain the shortest paths whose number of edges is between $ h_i/2 $ and $ h_i $ where $ h_i = 2^i $.

\paragraph{Preprocessing.}
For every layer $ i $ (with $ 1 \leq i \leq \lceil \log{n} \rceil $), in the preprocessing phase, we first randomly sample a set $C_i$ of $ \tilde O (n / h_i) $ nodes, called centers, which with high probability will `hit' every shortest path in the graph that has at least $ h_i/2 $ edges.
The bound on the size of $ C_i $ is guaranteed with high probability.
After the sampling we visit a subset $R_i$ of the nodes in a specific order that is determined by the algorithm `on-the-fly'.
Every time we visit a node $ v $, we perform the following operations in the graph $ G_i^v $ from which all edges incident to previously visited nodes have been removed:
for every node $ x $ we compute the shortest $ \leq h_i $ hop path $ \pi_i^v (v, x) $ from $ v $ to $ x $ and the shortest $ \leq h_i $ hop path $ \pi_i^v (x, v) $ from $ x $ to $ v $ by running $ h_i $ iterations of the Bellman-Ford algorithm in $ G_i^v $ and its reverse graph.
To be precise, we compute the shortest $ \leq h_i $ hop paths with the minimum number of edges.
We keep a counter for every node $ u $ to count the `congestion' of $ u $, which for us is the total number of shortest $ \leq h_i $ hop paths containing $ u $ computed in the preprocessing.

The order in which we visit the nodes is determined as follows.
Until all centers have been visited, we alternate between visiting the center with the largest congestion and the non-center node with the largest congestion.
We will show that this greedy strategy limits the maximum congestion of each node, which in turn bounds the work we have to do for updating the shortest paths if this node is deleted later on.
Additionally, we compute for every pair of nodes $ s $ and $ t $ and every visited node $ v $ the distance $\delta_i (s, v, t)$ from $ s $ to $ t $ through $ v $ using the in- and out-trees computed for $ v $; we then sort all nodes $ v \in R_i $ according to their value $\delta_i (s, v, t)$ and store them in a list $ L_i (s, t) $.
\Cref{alg:preprocessing} shows the pseudocode for the preprocessing.
Observe that a shortest path $ \pi $ from $ s $ to $ t $ can by reconstructed from the stored paths if, for the hop range $ h_i / 2 \ldots h_i $ of $ \pi $, we identify the node $ v $ on $ \pi $ that was visited first, which means that $ \pi $ is contained in $ G^v $, and concatenate $ \pi_i^v (s, v) $ and $ \pi_i^v (v, t) $.
Thus, the minimum of all $ \delta_i (s, v, t) $ gives the distance from $ s $ to $ t $ (and we can construct the corresponding shortest path).
Whenever a node $ u $ is deleted from the graph we destroy some of the paths $ \pi_i^v (s, v) $ or $ \pi_i^v (v, t) $ and for such nodes $ s $ and $ t $ we will recompute the shortest paths from and to $ v $.
The number of destroyed paths is equal to the congestion of $ u $, which is our motivation for limiting the maximum congestion of each node.

\begin{algorithm}
\caption{Preprocessing a graph $ G = (V, E) $}
\label{alg:preprocessing}

\SetKwProg{procedure}{Procedure}{}{}
\SetKwFunction{preprocess}{Preprocess}
\SetKwFunction{sample}{Sample}
\SetKwFunction{visit}{Visit}
	
\procedure{\preprocess{$G$}}{
	\For{ $i = 1 $ \KwTo $ \lceil \log{n} \rceil $}{
		$ h_i \gets 2^i $\;
		$ x \gets 1 + 3 (c+1) \ln{n} $\;
		$ C_i \gets \emptyset $	\tcp*{Set of randomly chosen centers}
		\lForEach{$ v \in V $}{
			$ C_i \gets C_i \cup \{ v \} $ with probability $ \min (x/h_i, 1) $
		}
		$ R_i \gets \emptyset $ \tcp*{Set of visited nodes}
		\ForEach{$ v \in V $}{
			$ c_i (v) \gets 0 $ \tcp*{Initialize congestion counter}
		}
		\While(\tcp*[f]{Repeat until all centers have been visited}){$ C_i \setminus R_i \neq \emptyset $}{
			$ v \gets \argmax_{v \in C_i \setminus R_i} c_i (v) $ \tcp*{Visit center with largest congestion}
			\ProcNameSty{Visit}($v$, $i$)\;
			\If(\tcp*[f]{If there are unvisited non-center nodes}){$ V \setminus (C_i \cup R_i) \neq \emptyset $}{
				\tcp{Visit non-center node with largest congestion}
				$ v \gets \argmax_{v \in V \setminus (C_i \cup R_i)} c(v) $\;
				\ProcNameSty{Visit}($v$, $i$)\;
			}
		}
		\ForEach{pair of nodes $ s, t \in V $}{
			Construct list $ L_i (s, t) $ containing all nodes $ v \in R_i $ sorted by their value $ \delta_i (s, v, t) $\;
		}
	}
}

\BlankLine

\procedure{\visit{$v$, $i$}}{
	Construct $ G_i^v = (V, E \setminus (V \times R_i \cup R_i \times V)) $ (in which all edges incident to previously visited nodes are removed)\;
	Run $ h_i $ iterations of Bellman-Ford from $ v $ in $ G_i^v $ and its reverse graph to compute for every node $ x \in V \setminus R_i $ the shortest $ \leq h_i $ hop path $ \pi_i^v (v, x) $ from $ v $ to $ x $ and the shortest $ \leq h_i $ hop path $ \pi_i^v (x, v) $ from $ x $ to $ v $ in $ G_i^v $\;
	$ R_i \gets R_i \cup \{ v \} $\;	
\ForEach{$ u \in V \setminus R_i $}{
		\tcp{Increase congestion counter of $ u $ by number of $ \leq h_i $ hop shortest paths containing $ u $}
		$ c_i (u) \gets c_i (u) + | \{ x \in V \mid u \in \pi_i^v (x, v) \text{ or } \pi_i^v (v, x) \} | $\;
	}
	\ForEach{pair of nodes $ s, t \in V $}{
		$ \delta_i (s, v, t) \gets \dist^{h_i}_{G_i^v} (s, v) + \dist^{h_i}_{G_i^v} (v, t) $\;
	}
}
\end{algorithm}

\paragraph{Batch deletions.}
A batch of deletions given by a set $ D $ is processed by first recomputing the shortest paths consisting of up to $ h = \sqrt{n / |D|} $ edges and then recomputing the shortest paths consisting of more than $ h $ edges.
To find the shortest paths with at most $ h $ edges, we perform the following steps for every $ 1 \leq i \leq \lceil \log{h} \rceil $.
For every $ v \in R_i$, we first iterate over every path $ \pi_i^v (x, v) $ from the preprocessing stage to determine every node $ x $ for which the path \emph{to} $ v $ has been destroyed by one of the deletions in $ D $ and store these nodes in the set $ U_i^v $ of \emph{affected} nodes.
Similarly, we add to $ U_i^v $ all nodes whose shortest path \emph{from} $ v $ has been destroyed.
Note that for all non-affected nodes the shortest $ \leq h_i $ path from and to $v$ did not change since the preprocessing.

We next explain how to compute the new shortest paths from and to $v$ for nodes in $ U_i^v $ as follows.
We compute a sparse sketch graph $ H_i^v $ consisting of the following edges.
For each affected node $ x \in U_i^v $ we add all its incident edges (both incoming and outgoing) to $ H_i^v $.
For each non-affected node $ x \notin U_i^v $ we include two edges: one to the successor of $ x $ on $ \pi_i^v (x, v) $ and one to the predecessor of $ x $ on  $ \pi_i^v (v, x) $.
We will show that by this rule all shortest paths from and to $ v $ in $ G_i^v \setminus D $ with at most $ h_i $ edges are present in the sketch graph.
We then run Dijkstra's algorithm to determine the shortest paths to and from $ v $ in $ H_i^v $.
Furthermore, for each affected node $ x \in U_i^v $ and every possible start or endpoint $ y $, we recompute $ \delta_i (x, v, y) $, the distance from $ x $ to $ y $ through $ v $ in $ H_i^v $ and $ \delta_i (y, v, x) $, the distance from $ y $ to $ x $ through $ v $ in $ H_i^v $, possibly replacing the corresponding value computed in the preprocessing.

To find the shortest paths consisting of more than $ h $ edges, we first run Dijkstra's algorithm to compute the shortest paths to and from every center in $ C_{\lceil \log{h} \rceil} $ and then compute the shortest paths through these centers.
Finally, we determine, for every pair of nodes $s$ and $t$, the shortest path distance $\delta (s, t)$ as the length of the shortest path through any of the centers used in this algorithm.
\Cref{alg:deletion} shows the pseudocode for batch deletions.

\begin{algorithm}
	\caption{Deletion procedure for a single batch of nodes $ D $}
	\label{alg:deletion}

	$ h \gets \sqrt{n / |D|} $\;
	\For{$i = 1 $ \KwTo $ \lfloor \log{h} \rfloor $}{
		\ForEach{$ v \in R_i $}{
			\tcp{Iterate over all paths from preprocessing to determine affected nodes whose shortest $ \leq h_i $ hop paths to or from $ v $ was destroyed}
			$ U_i^v \gets \{ x \in V \mid \pi_i^v (x, v) \cap D \neq \emptyset \text{ or } \pi_i^v (v, x) \cap D \neq \emptyset\} $\;
			\tcp{Construct sketch graph $ H_i^v = (V, E_i^v) $}
			$ E_i^v \gets \emptyset $\;
			\tcp{Add edges to and from all neighbors for affected nodes}
			\lForEach{$ y \in U_i^v $}{	
				$ E_i^v \gets E_i^v \cup \{ (y, z) \mid (y, z) \in E \} \cup \{ (z, y) \mid (z, y) \in E \} $
			}
			\tcp{Add edges of paths from the preprocessing stage for unaffected nodes}
			\ForEach{$ y \notin U_i^v $}{ 
				Determine predecessor $ x $ of $ y $ on $ \pi_i^v (v, y) $ and successor $ z $ of $ y $ on $ \pi_i^v (y, v) $\;
				$ E_i^v \gets E_i^v \cup \{ (x, y), (y, z) \} $ \;
			}
			Compute SSSP from and to $ v $ in $ H_i^v = (V, E_i^v) $ using Dijkstra's algorithm\;
			\tcp{Update shortest paths through centers for pairs with at least one affected node}
			\lForEach{$ (s, t) \in U_i^v \times V $ and $ (s, t) \in V \times  U_i^v $}{
				$ \delta_i (s, v, t) \gets \dist_{H_i^v} (s, v) + \dist_{H_i^v} (v, t) $
			}

		}
		\ForEach{pair of nodes $ s, t \in V $}{
			$ \delta_i (s, t) \gets \min_{v \in R_i} \delta_i (s,v, t)$ \tcp{see \Cref{lem:time per deletion} for implementation using $ L_i (s, t) $} \label{line:computing_min}
		}
	}
	\lForEach{$ v \in C_{\lceil \log{h} \rceil} $}{
		Compute SSSP from and to $ v $ in $ G \setminus D $ using Dijkstra's algorithm
	}
	\ForEach{pair of nodes $ s, t \in V $}{
			$ \delta_{\lceil \log{h} \rceil} (s, t) \gets \min_{v \in C_{\lceil \log{h} \rceil}} (\dist_{G \setminus D} (s, v) + \dist_{G \setminus D} (v, t)) $\;
			$ \delta (s, t) \gets \min_{1 \leq i \leq \lceil \log{h} \rceil} \delta_i (s, t) $\;
	}
\end{algorithm}

\subsection{Correctness}\label{sec:correctness}

We have to show that our algorithm can serve a batch deletion correctly with probability at least $ 1 - 1 / n^c $.
Just for the following analysis we assume that among all shortest paths for a fixed pair of nodes there is a single \emph{designated} shortest path (e.g., the smallest according to some order).
In order to prove this statement we first show that the sketch graphs used in our algorithm contain the shortest paths relevant to us.

\begin{claim}\label{lem:sketch contains shortest paths with few hops}
Let $ 1 \leq i \leq \lceil \log{n} \rceil $.
For every $ v \in R_i $ and all pairs of nodes $ s, t \in G_i^v \setminus D $,
\begin{itemize}
\item if there is a shortest path from $ s $ to $ v $ in $ G_i^v \setminus D $ with at most $ h_i $ edges, then $ \dist_{H_i^v} (s, v) = \dist_{G_i^v \setminus D} (s, v) $
\item if there is a shortest path from $ v $ to $ t $ in $ G_i^v \setminus D $ with at most $ h_i $ edges, then $ \dist_{H_i^v} (v, t) = \dist_{G_i^v \setminus D} (v, t) $.
\end{itemize}
\end{claim}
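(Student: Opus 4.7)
I would prove the two inequalities $\dist_{H_i^v}(s,v) \ge \dist_{G_i^v \setminus D}(s,v)$ and $\dist_{H_i^v}(s,v) \le \dist_{G_i^v \setminus D}(s,v)$ separately, treating only the first bullet (from $s$ to $v$); the second bullet follows from the identical argument applied to the reverse graph. The lower bound is immediate from the observation that $H_i^v$ is a subgraph of $G_i^v \setminus D$: every edge added for an unaffected node $y \notin U_i^v$ lies on $\pi_i^v(y,v)$ or $\pi_i^v(v,y)$, each of which is contained in $G_i^v$ by construction and, by the definition of ``unaffected'', avoids every vertex of $D$; and the incident edges added for an affected $y \in U_i^v$ can be restricted to those already present in $G_i^v \setminus D$.

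For the matching upper bound I would greedily construct a walk $s = x_0, x_1, \ldots, x_\ell = v$ in $H_i^v$ whose total weight equals $D(s) := \dist_{G_i^v \setminus D}(s,v)$, maintaining the invariant that there is a shortest path from $x_t$ to $v$ in $G_i^v \setminus D$ using at most $h_i - t$ hops; the base case $t = 0$ is the hypothesis of the claim. Given $x_t$ satisfying the invariant, the next step is chosen by cases. If $x_t \in U_i^v$, then every edge of $G_i^v \setminus D$ incident to $x_t$ lies in $H_i^v$, so I let $(x_t, x_{t+1})$ be the first edge of any shortest $\leq h_i - t$ hop path in $G_i^v \setminus D$; the suffix of that path reestablishes the invariant at $x_{t+1}$. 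If $x_t \notin U_i^v$, the path $\pi_i^v(x_t, v)$ lies entirely in $G_i^v \setminus D$, so as a path in $G_i^v \setminus D$ it has length at least $D(x_t)$, while as the shortest $\le h_i$-hop path in $G_i^v$ it has length at most that of the invariant's shortest path, which is $D(x_t)$; hence $\pi_i^v(x_t, v)$ has length exactly $D(x_t)$. The ``min-edges'' tie-breaking baked into the preprocessing definition of $\pi_i^v(\cdot,\cdot)$ then forces $\pi_i^v(x_t, v)$ to have no more hops than the invariant's shortest path, so its first edge, which is in $H_i^v$ by construction, leads to an $x_{t+1}$ whose suffix in $\pi_i^v(x_t,v)$ restores the invariant.

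Since the hop bound decreases strictly by one per step, the walk terminates at $v$ in at most $h_i$ steps, and its edge weights telescope to $D(s) - D(v) = D(s)$, establishing $\dist_{H_i^v}(s,v) \leq \dist_{G_i^v \setminus D}(s,v)$. I expect the crux of the proof to be the unaffected case: the algorithm only stores the successor of $x_t$ on $\pi_i^v(x_t, v)$, not on the hypothetical invariant path, so the min-edges tie-breaking is essential to transfer the $\le h_i - t$ hop bound from the invariant to $\pi_i^v(x_t,v)$; without it one could in principle pick up extra hops at every unaffected-case step and fail to terminate within $h_i$ iterations.
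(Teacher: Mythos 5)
Your proposal is correct and is essentially the paper's proof in different clothing: the paper argues by induction on the number of edges of the (designated) shortest path, with exactly your two cases --- at an affected node follow the true shortest path's first edge (all incident edges are in $H_i^v$), and at an unaffected node observe that $\pi_i^v(\cdot, v)$ has equal weight and, by the minimum-edge tie-breaking, no more hops, then follow its stored successor edge --- which is precisely your greedy walk with a decreasing hop budget. Your explicit treatment of the $\dist_{H_i^v}(s,v) \geq \dist_{G_i^v \setminus D}(s,v)$ direction is a small addition the paper leaves implicit, but otherwise the two arguments coincide, including the identification of the tie-breaking rule as the crux.
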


\begin{proof}
We only give a proof of the first item as the proof of the second item is symmetric.
Let $ \pi $ denote the designated shortest path from $ s $ to $ v $ in $ G_i^v \setminus D $ with at most $ h_i $ edges.
The proof is by induction on the number of edges of $ \pi $.
Remember that $ \pi_i^v (s, v) $ denotes the shortest $ \leq h_i $ hop path from $ s $ to $ v $ in $ G_i^v $ determined in the preprocessing.
Let $ x $ be the successor of $ s $ on $ \pi $ and let $ y $ be the successor of $ s $ on $ \pi_i^v (s, v) $.

If $ s \in U_i^v $, then the edge $ (s, x) $ is contained in $ H_i^v $ by the definition of $ H_i^v $ and by applying the induction hypothesis on $ x $ we get $ \dist_{H_i^v} (s, v) = \dist_{G_i^v \setminus D} (s, v) $.
If $ s \notin U_i^v $, then $ \pi_i^v (s, v) $ does not contain any deleted nodes and the edge $ (s, y) $ is contained in $ H_i^v $.
The weight of $ \pi_i^v (s, v) $ equal the weight of $ \pi $ because otherwise $ \pi $ would have been a shorter $ \leq h_i $ hop path than $ \pi_i^v (s, v) $ in $ G_i^v $ during the preprocessing.
Thus, $ \pi_i^v (s, v) $ is a shortest path in $ G_i^v \setminus D $.
Furthermore the number of edges of $ \pi_i^v (s, v) $ is at most the number of edges of $ \pi $ as $ \pi_i^v (s, v) $ is the shortest $ \leq h_i $ hop path with the \emph{minimum} number of edges in $ G_i^v $.
Therefore we may apply the induction hypothesis on $ y $ and conclude that $ \dist_{H_i^v} (s, v) = \dist_{G_i^v \setminus D} (s, v) $.
\end{proof}

\begin{claim}\label{lem:centers hit paths}
With probability at least $ 1 - 1/n^c $, for all pairs of nodes $ s, t \in V $, if the designated shortest path from $ s $ to $ t $ has at least $ h_i / 2 $ hops, then it contains a center $ v \in C_i $.
\end{claim}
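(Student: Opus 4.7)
The plan is to derive this claim as a direct application of the random hitting set lemma (\Cref{lem:random hitting set}) with an appropriate choice of parameters. First I would fix the layer $i$ and enumerate the "hard" pairs: for every ordered pair $(s,t) \in V \times V$ whose designated shortest path in $G$ has at least $h_i/2$ edges, let $S_{s,t}$ be the set of nodes on that path. There are at most $k \leq n^2$ such pairs, the ground set is $T = V$ with $t = n$, and each $S_{s,t}$ has at least $h_i/2 + 1 \geq h_i/2$ distinct elements because a path with $\ge h_i/2$ edges visits that many vertices, so I would take $q = h_i / 2$. A node is picked into $C_i$ independently with probability $\min(x/h_i, 1)$ where $x = 1 + 3(c+1)\ln n$, which is exactly a sampling rate of the form $\min((x/2)/q, 1)$ in the notation of \Cref{lem:random hitting set}.

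Second, I would do the direct Chernoff-style estimate (this is easier than book-keeping the constants inside the hitting set lemma). For a fixed relevant pair $(s,t)$,
\[
\Pr[S_{s,t} \cap C_i = \emptyset] \leq \left(1 - \frac{x}{h_i}\right)^{h_i/2 + 1} \leq e^{-x (h_i/2 + 1)/h_i} \leq e^{-x/2} = e^{-1/2} \cdot n^{-3(c+1)/2}.
\]
A union bound over the at most $n^2$ relevant ordered pairs then gives a total failure probability of at most $e^{-1/2} \cdot n^{2 - 3(c+1)/2}$, and the arithmetic $2 - 3(c+1)/2 \leq -c \iff c \geq 1$ shows this is bounded by $n^{-c}$ under the assumption of the theorem. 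Any path that survives this bad event contains at least one node of $C_i$, which is exactly what the claim asserts.

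The only subtle point is matching constants: one must verify that the specific choice $x = 1 + 3(c+1)\ln n$ fixed in \textsc{Preprocess} is large enough to beat the union bound over $n^2$ pairs while still yielding an overall sample size $|C_i| = \tilde O(n/h_i)$ (the latter is exactly the second conclusion of \Cref{lem:random hitting set}). There is no real conceptual obstacle here; once the family of hit-sets is identified as the node sets of designated long shortest paths, the claim is a routine hitting-set calculation and does not depend on the order in which \textsc{Visit} visits the centers.
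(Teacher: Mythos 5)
Your proof is correct and takes essentially the same route as the paper: the paper likewise treats the at most $n^2$ designated long shortest paths (there taken in $G \setminus D$, which changes nothing in the calculation) as the sets to be hit by the sampled centers and invokes \Cref{lem:random hitting set} as a black box, which is proved by exactly the Chernoff-plus-union-bound estimate you carry out inline. If anything, your explicit computation tracks the constants more carefully than the paper's one-line invocation (which glosses over the factor-2 mismatch between the algorithm's sampling rate $x/h_i$ and the lemma's $\min(x/q,1)$ with $q = h_i/2$); the only case worth a separate word is $x/h_i \geq 1$, where every node is sampled and the claim is trivial.
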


\begin{proof}
We argue that all lexicographic shortest paths with at least $ 2^{i-1} $ edges contain a node of $ C_i $ with probability at least $ 1 - 1/n^c $.
We apply \Cref{lem:random hitting set} with $ a = c $, $ T = V $, $ U = C_i $, $ t = n $, $ q = h_i/2 $ and by defining $ S_1, \ldots, S_k $ with $ k \leq n^2 $ as the sets of nodes on the at most $ n^2 $ lexicographic shortest paths of pairs of nodes in $ G \setminus D $ with at least $ (h_i / 2) $ edges.
Remember that the algorithm performs the sampling by picking each node with probability $ \min (x/q, 1) $ where $ x \leq c \ln{n^3} + 1 $.
Thus, all lexicographic shortest paths with at least $ 2^{i-1} $ edges contain a center from $ C_i $ with probability at least $ 1 - 1 / n^c $.
\end{proof}

\begin{claim}
With probability at least $ 1 - 1/n^c $,  $ \delta (s, t) = \dist_{G \setminus D} (s, t) $ for all pairs of nodes $ s $ and~$ t $.
\end{claim}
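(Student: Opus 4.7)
The plan is to show $\delta(s,t)=\dist_{G\setminus D}(s,t)$ pairwise via a two-sided bound. The lower bound $\delta(s,t)\ge \dist_{G\setminus D}(s,t)$ is essentially by inspection: every value $\delta_i(s,v,t)$ retained by the algorithm is, by construction, the sum of an $s$-to-$v$ walk length and a $v$-to-$t$ walk length, each lying in $G\setminus D$ (either via the preprocessed paths $\pi_i^v(s,v)$, $\pi_i^v(v,t)$, which are used only when they avoid $D$; via Dijkstra on the sketch $H_i^{v}$ restricted to $V\setminus D$ in the affected case; or via the final Dijkstra at layer $\lceil \log h\rceil$ which is run directly in $G\setminus D$). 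Concatenation yields an $s$-to-$t$ walk in $G\setminus D$ of length $\delta_i(s,v,t)$, so the minimum defining $\delta(s,t)$ also satisfies the inequality.

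For the upper bound, fix a pair $s,t$, let $\pi$ be the designated shortest path in $G\setminus D$ of length $L$ with $h^\ast$ hops, and condition on the conclusion of \Cref{lem:centers hit paths} holding simultaneously for all layers $i = 1,\ldots,\lceil\log n\rceil$. A union bound over the $O(\log n)$ layers, combined with the algorithm's choice $x=1+3(c+1)\ln n$, gives this joint event probability at least $1-1/n^c$. The analysis then splits on $h^\ast$. If $h^\ast \le 2^{\lfloor \log h\rfloor}$, pick $i=\lceil\log h^\ast\rceil$, so that $h_i/2\le h^\ast\le h_i$ and $i\le \lfloor\log h\rfloor$; the conditioned event yields a center $v\in C_i\subseteq R_i$ on $\pi$. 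I would then take $v^\ast\in R_i\cap \pi$ to be the node the algorithm visited earliest in layer $i$. By the definition of $G_i^{v^\ast}$, all edges incident to earlier-visited nodes of $R_i$ are removed, so $\pi$ lies entirely in $G_i^{v^\ast}\setminus D$; splitting it at $v^\ast$ produces subpaths $\pi_1,\pi_2$ of at most $h_i$ hops each, lying in $G_i^{v^\ast}\setminus D$. I finish this case by distinguishing whether $\{s,t\}\cap U_i^{v^\ast}=\emptyset$ — in which case the preprocessed paths $\pi_i^{v^\ast}(s,v^\ast)$ and $\pi_i^{v^\ast}(v^\ast,t)$ survive the batch, are at most as long as $\pi_1,\pi_2$, and already certify $\delta_i(s,v^\ast,t)\le L$ — or $\{s,t\}\cap U_i^{v^\ast}\ne \emptyset$, in which case the deletion procedure updates $\delta_i(s,v^\ast,t)=\dist_{H_i^{v^\ast}}(s,v^\ast)+\dist_{H_i^{v^\ast}}(v^\ast,t)$ and \Cref{lem:sketch contains shortest paths with few hops} gives the same bound. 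In the remaining case $h^\ast > 2^{\lfloor \log h\rfloor}$, one has $h^\ast \ge 2^{\lceil \log h\rceil - 1}=h_{\lceil\log h\rceil}/2$, so the conditioned event places a center $v\in C_{\lceil \log h\rceil}$ on $\pi$; the explicit Dijkstras in $G\setminus D$ then give $\delta_{\lceil\log h\rceil}(s,t)\le \dist_{G\setminus D}(s,v)+\dist_{G\setminus D}(v,t)=L$.

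I expect the main obstacle to be the in-layer analysis for small-hop paths: arguing that \emph{the} node $v^\ast$ visited earliest on $\pi$ simultaneously satisfies $\pi\subseteq G_i^{v^\ast}$ (so that the preprocessed or sketch distances involving $v^\ast$ are meaningful) and that $\delta_i(s,v^\ast,t)$ is bounded by $L$ regardless of which side of the $U_i^{v^\ast}$-dichotomy $s$ and $t$ fall into. The union bound over the $\lceil\log n\rceil$ layers is the other piece of bookkeeping, absorbed cleanly by the $(c+1)$ factor in the sampling threshold so that the overall failure probability remains at most $1/n^c$.
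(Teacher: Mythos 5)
Your proposal is correct and follows essentially the same route as the paper's proof: the lower bound by observing every stored $\delta_i(s,v,t)$ is the length of a path in $G\setminus D$, and the upper bound by conditioning on \Cref{lem:centers hit paths}, taking the earliest-visited node of $R_i$ on the designated shortest path so that the path survives in $G_i^{v}\setminus D$, splitting on whether an endpoint lies in $U_i^{v}$ (using \Cref{lem:sketch contains shortest paths with few hops} in the affected case), and handling long paths via the centers of $C_{\lceil\log h\rceil}$. Your explicit union bound over the layers is in fact slightly more careful bookkeeping than the paper's phrasing, but it is the same argument.
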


\begin{proof}
We prove the claim by assuming that the statement of \Cref{lem:centers hit paths} holds.
Thus, the claim we prove also holds with probability at least $ 1 - 1/n^c $.

First, we argue that $ \delta (s, t) \geq \dist_{G \setminus D} (s, t) $.
Observe that whenever we set the value of $ \delta_i (s, v, t) $ (for some $ 1 \leq i \leq \lceil \log{h} \rceil $ and some $ v \in R_i $) during the preprocessing or the deletion procedure, this value corresponds to the length of a path in a subgraph of $ G $.
Furthermore, if $ \pi_i^v (s, v) $ or $ \pi_i^v (v, t) $ contains a deleted node from $ D $, then the value $ \delta_i (s, v, t) $ is updated to the length of a path in $G \setminus D$.
It is not hard to verify now that $ \delta (s, t) \geq \dist_{G \setminus D} (s, t) $.

We now argue that $ \delta (s, t) \leq \dist_{G \setminus D} (s, t) $.
Let $ \pi $ denote the designated shortest path from $ s $ to $ t $ in $ G \setminus D $.
Consider first the case that $ \pi $ consists of at most $ h $ edges (where $ h =  \sqrt{n / |D|} $, as set in the algorithm).
Let $ 1 \leq i \leq \lfloor \log{h} \rfloor $ be the index such that the number of edges of $ \pi $ is between $ 2^{i-1} $ and $ 2^i $.
By \Cref{lem:centers hit paths}, $ \pi $ contains a center of $ C_i $ with probability at least $ 1 - 1/n^c $.
As $ C_i \subseteq R_i $, $ \pi $ contains at least one node of $ R_i $.
Now let $ v $ be the node that, among all nodes of $ R_i $ contained in $ \pi $, has been visited first in the preprocessing.
Then all edges of $ \pi $ are contained in $ G_i^v \setminus D $.
If either $ s \in U_i^v $ or $ t \in U_i^v $, then the update algorithm has set $ \delta_i (s, v, t) = \dist_{H_i^v} (s, v) + \dist_{H_i^v} (v, t) $.
By \Cref{lem:sketch contains shortest paths with few hops}, we have $ \dist_{H_i^v} (s, v) = \dist_{G_i^v \setminus D} (s, v) $ and $ \dist_{H_i^v} (v, t) = \dist_{G_i^v \setminus D} (v, t) $.
It follows that
\begin{multline*}
\delta (s, t) \leq \delta_i (s, v, t) = \dist_{H_i^v} (s, v) + \dist_{H_i^v} (v, t) = \dist_{G_i^v \setminus D} (s, v) + \dist_{G_i^v \setminus D} (v, t) \leq \\ \dist_{G \setminus D} (s, v) + \dist_{G \setminus D} (v, t) = \dist_{G \setminus D} (s, t) \, .
\end{multline*}
If both $ s \notin U_i^v $ and $ t \notin U_i^v $, then we have set $ \delta_i (s, v, t) = \dist_G^{h_i} (s, v) + \dist_G^{h_i} (v, t) $ in the preprocessing.
As $ s \notin U_i^v $  and $ t \notin U_i^v $, the paths $ \pi_i^v (s, v) $ and $ \pi_i^v (v, t) $ both are contained in $ G \setminus D $ and thus
\begin{multline*}
\delta (s, t) \leq \delta_i (s, v, t) = \dist_G^{h_i} (s, v) + \dist_G^{h_i} (v, t) \leq \dist_{G \setminus D}^{h_i} (s, v) + \dist_{G \setminus D}^{h_i} (v, t) = \\ \dist_{G \setminus D} (s, v) + \dist_{G \setminus D} (v, t) = \dist_{G \setminus D} (s, v) \, .
\end{multline*}

Finally, consider the case that $ \pi $ consists of at least $ h $ edges.
Then, by another application of \Cref{lem:centers hit paths}, $ \pi $ contains a center $ v \in C_{\lceil \log{h} \rceil} $ and thus $ \delta (s, t) \leq \delta_{\lceil \log{h} \rceil} (s, t) \leq \dist_{G \setminus D} (s, v) + \dist_{G \setminus D} (v, t) = \dist_{G \setminus D} (s, t) $.
\end{proof}

\subsection{Running time}

In the following we analyze the time our algorithm needs for performing the preprocessing and for recomputing the all-pairs distances after a batch of deletions.
The running time guarantees hold with high probability as they depend on the size of the randomly chosen sets in the preprocessing.

\begin{claim}\label{lem:number of centers}
With probability at least $ 1 - 1 / n^c $, the size of $ C_i $ is $ O ((n \log{n}) / h_i) $ for all $ 1 \leq i \leq \lceil \log{n} \rceil $.
\end{claim}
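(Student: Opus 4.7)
My plan is to apply the size bound (item~2) of \Cref{lem:random hitting set} to each of the $\lceil \log n \rceil$ layers separately and then take a union bound over layers. The claim is purely about the size of the sample $C_i$, so the hitting property (item~1) plays no role; I only need the concentration inequality that item~2 packages.

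First, I would match parameters. For a fixed layer $i$, the algorithm constructs $C_i$ by keeping each node of $V$ independently with probability $\min(x/h_i, 1)$, where $x = 1 + 3(c+1)\ln n$. Setting $T = V$, $t = n$, $U = C_i$, $q = h_i$, $a = c+1$, and $k = n^2$ in \Cref{lem:random hitting set}, one checks that $a\ln(kt) + 1 = (c+1)\ln(n^3) + 1 = 1 + 3(c+1)\ln n = x$, so the sampling distribution is exactly the one required by the lemma. The required family $S_1,\ldots,S_k$ can be chosen arbitrarily subject to $|S_j| \geq h_i$ (for instance $S_j = V$), since item~2 depends only on the sampling distribution.

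Item~2 then yields $|C_i| \leq 3xn/h_i = O((c\, n \log n)/h_i) = O((n \log n)/h_i)$ with probability at least $1 - 1/n^{c+1}$, where the factor $c$ is absorbed into the hidden constant (consistent with the $O$-notation used elsewhere in the paper, where the parameter $c$ appears in the constant). A union bound over the $\lceil \log n \rceil$ values of $i$ gives a total failure probability of at most $\lceil \log n \rceil / n^{c+1} \leq 1/n^c$ for $n$ sufficiently large, completing the argument. There is no real obstacle: the estimate is a direct application of the concentration bound already available in \Cref{lem:random hitting set}, with the extra $\log n$ factor from the union bound absorbed into the polynomial slack in the failure exponent.
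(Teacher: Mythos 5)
Your proposal is correct and is essentially the paper's own (omitted) argument: the paper merely asserts that the claim follows from \Cref{lem:random hitting set}, and your instantiation with $a = c+1$, $k = n^2$, $t = n$, $q = h_i$ reproduces exactly the algorithm's $x = 1 + 3(c+1)\ln n$, so item~2 gives $|C_i| \le 3xn/h_i = O((n\log n)/h_i)$ with failure probability at most $1/n^{c+1}$ per layer, and the union bound over the $\lceil \log n \rceil$ layers finishes it. The only cosmetic point is that for the top layer $h_i$ may exceed $n$, so the placeholder family $S_j = V$ does not literally satisfy $|S_j| \ge q$; as you correctly note, item~2 is a pure concentration statement about the sample that does not depend on the sets, so this is immaterial.
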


The claim follows in a straightforward way from \Cref{lem:random hitting set} and its proof is thus omitted.
In the remainder of this section we omit the $ c $ in the $ O $-notation for readability.

\begin{claim}
The preprocessing takes time $ O (n^3 \log^2{n}) $ with probability at least $ 1 - 1 / n^c $.
\end{claim}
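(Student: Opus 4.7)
The plan is to charge the preprocessing time layer by layer, using the bound on $|C_i|$ from \Cref{lem:number of centers} to bound $|R_i|$, and then bounding the cost of each call to \ProcNameSty{Visit}.

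First, I would observe that the while-loop of \ProcNameSty{Preprocess} at layer $i$ terminates when $C_i \subseteq R_i$, and that each iteration of the loop invokes \ProcNameSty{Visit} once on a center and at most once on a non-center node. Therefore $|R_i| \leq 2|C_i|$. Invoking \Cref{lem:number of centers}, we get $|R_i| = O((n \log n)/h_i)$ simultaneously for all $i$ with probability at least $1 - 1/n^c$. We condition on this event for the remainder of the analysis, so all subsequent time bounds hold with the same probability.

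Next I would bound the cost of a single call to \ProcNameSty{Visit}$(v, i)$. Constructing $G_i^v$ costs $O(n^2)$. The $h_i$ Bellman-Ford iterations on $G_i^v$ and its reverse take $O(h_i \cdot |E(G_i^v)|) = O(h_i n^2)$; recording the paths $\pi_i^v(v, x)$ and $\pi_i^v(x, v)$ with the minimum number of edges fits in the same budget. Updating the congestion counters amounts to walking the $2n$ stored paths, each of length at most $h_i$, for $O(h_i n)$ total. Computing $\delta_i(s, v, t)$ for all $n^2$ pairs $(s, t)$ costs $O(n^2)$. Thus the dominant cost per visit is $O(h_i n^2)$, and summing over all visits at layer $i$ gives
\[
|R_i| \cdot O(h_i n^2) \;=\; O\!\left(\tfrac{n \log n}{h_i}\right) \cdot O(h_i n^2) \;=\; O(n^3 \log n).
\]

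After the visits at layer $i$ finish, the algorithm sorts, for each of the $n^2$ pairs $(s,t)$, the nodes $v \in R_i$ by the value $\delta_i(s, v, t)$ into the list $L_i(s,t)$. This costs $O(n^2 \cdot |R_i| \log |R_i|) = O(n^3 \log^2 n / h_i)$ per layer, which forms a geometric series in $i$ dominated by the smallest $h_i$, giving $O(n^3 \log^2 n)$ overall. Summing the visit cost over the $\lceil \log n \rceil$ layers contributes another $O(n^3 \log^2 n)$, and together with the sorting cost and the sampling step (trivially $O(n \log n)$ per layer), the total preprocessing time is $O(n^3 \log^2 n)$, as claimed.

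The only non-routine step is the bound $|R_i| \leq 2|C_i|$; everything else is a direct charging argument. That inequality follows immediately from the structure of the while-loop, so no further obstacle remains, and the claim follows by conditioning on the high-probability event from \Cref{lem:number of centers}.
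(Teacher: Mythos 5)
Your proposal is correct and follows essentially the same route as the paper: condition on the high-probability bound on $|C_i|$, use $|R_i| \leq 2|C_i| = O((n\log n)/h_i)$, charge $O(h_i n^2)$ per call to Visit so each layer costs $O(n^3 \log n)$, and add the sorting cost, summing over the $O(\log n)$ layers. Your only (harmless) deviation is bounding the sorting via $|R_i|$ and a geometric series, where the paper simply uses the cruder bound of at most $n$ nodes per list, i.e., $O(n^3\log n)$ per layer; both yield the stated $O(n^3 \log^2 n)$.
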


\begin{proof}
We analyze the running time of each iteration $ i $, where $ 1 \leq i \leq \lceil \log{n} \rceil $.
We obtain the set $ C_i $ of sampled nodes in time $ O (n) $.
Running $ h_i $ iterations of Bellman-Ford on a graph with at most $ n $ nodes takes time $ O (h_i n^2) $.
We perform one such computation for every node in $ R_i $ and $ |R_i| \leq 2 |C_i| = O ( (n \log{n}) / h_i) $ (by \Cref{lem:number of centers}).
Thus, the total time spent for the Bellman-Ford computations in iteration $ i $ is $ O (|R_i| n^2 h_i)) = O (n^3 \log{n}) $.
For updating the counters we iterate over all nodes in the $ \leq h_i $ shortest paths in total time $ O (|R_i| n h_i) = O (n^3) $.
For constructing the list $ L_i (s, t) $ for each pair of nodes $ s $ and $ t $ we sort at most $ n $ nodes and thus constructing all the lists takes time $ O (n^3 \log{n}) $.
It follows that the running time in each iteration is dominated by the term $ O (n^3 \log{n}) $.
As there are $ O (\log{n}) $ iterations, the total time spent in the preprocessing is $ O (n^3 \log^2{n}) $.
\end{proof}

To bound the time for processing a batch of deletions we first show that the special order in which we have visited the nodes in the preprocessing stage guarantees that only few nodes are affected by each deletion.
The fewer nodes are affected, the faster our algorithm runs.

\begin{claim}\label{lem:bound on congestion}
For each $ 1 \leq i \leq \lceil \log{n} \rceil $, every node is contained in at most $ O (h_i n \log{n}) $ of the stored shortest $ \leq h_i $ hop paths after the preprocessing, i.e., $ | \{ (x, v) \in V \times R_i \mid u \in \pi_i^v (x, v) \text{ or } u \in \pi_i^v (v, x) \} | = O (h_i n \log{n}) $.
\end{claim}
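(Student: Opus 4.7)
My plan is to cast the evolution of the congestion counters $c_i(\cdot)$ during the preprocessing as an instance of the stones-on-piles process of \Cref{lem:distributing stones on piles}, exploiting the greedy visit order that alternates largest-congestion center with largest-congestion non-center.

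First I quantify the stones deposited per visit. A visit of a node $v$ runs $h_i$ iterations of Bellman-Ford in $G_i^v$, producing at most $n$ forward and $n$ backward shortest $\leq h_i$ hop paths; each has at most $h_i+1$ nodes, and since $v$ already belongs to $R_i$ by the time the counter-update loop runs, each path triggers at most $h_i$ increments. So a single visit contributes at most $2 n h_i$ stones across the piles of the remaining (unvisited) nodes. One iteration of the outer while loop performs at most two visits (one center, possibly one non-center), so at most $L := 4 n h_i$ stones are added per iteration. Moreover $c_i(u)$ increases monotonically until $u$ itself enters $R_i$, at which point it freezes: the counter-update loop skips $u$, and all later Bellman-Ford runs use $G_i^{v'}$, which contains no edges incident to $u$, so $u$ never again appears on a stored path.

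With these two facts I apply \Cref{lem:distributing stones on piles} twice, once to the pile system formed by the centers $C_i$ and once to the system formed by the non-centers $V \setminus C_i$. Treating each iteration of the while loop as one round, the algorithm removes the currently largest center pile and (when applicable) the currently largest non-center pile; the adversary role is played by the graph and the Bellman-Ford output, which together deposit at most $L = O(n h_i)$ stones on the remaining piles in that round. Applying the lemma with $k = |C_i| = O((n \log n)/h_i)$ (using \Cref{lem:number of centers}) caps every center pile by $O(n h_i \log |C_i|) = O(n h_i \log n)$; applying it with $k \leq n$ caps every non-center pile by the same quantity. Combined with the monotonicity observation, the final $c_i(u)$ is $O(n h_i \log n)$ for every $u$, which is exactly the quantity the claim asks us to bound.

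The place that requires the most care is lining up the "remove, then redistribute" order of \Cref{lem:distributing stones on piles} with the algorithm's "select argmax, then visit" order, and checking that the $L$-bound per round survives the decomposition into a center-pile instance and a non-center-pile instance without double-counting any stones deposited during the two visits inside a single iteration.
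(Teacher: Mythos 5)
Your proof is correct and follows essentially the same route as the paper: bound the per-visit counter increase by $2 h_i n$ (hence $4 h_i n$ per while-loop iteration), split the nodes into a center pile system and a non-center pile system, and apply \Cref{lem:distributing stones on piles} to each, using that a counter freezes once its node is visited. The only cosmetic difference is your appeal to \Cref{lem:number of centers} for $|C_i|$, which is unnecessary since $\log k \leq \log n$ suffices for both pile systems.
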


\begin{proof}
Observe that we can prove the claim by showing that for every node $ u $ the counter $ c_i (u) $ is $ O (h_i n \log{n}) $ at any time.
In the preprocessing we alternate between (a) visiting the center node with maximum counter and (b) visiting the non-center node with maximum counter (if such a node exists).
After visiting a node~$ v $, the sum of the counters increases by at most $ 2 h_i n $ as for every node $ x $ the number of nodes on $ \pi_i^v (x, v) $ and $ \pi_i^v (x, v) $ (except for $ v $ itself) is at most $ h_i $, respectively.
A visited node is removed from the graph and thus not visited again in iteration $ i $.

Consider the following two processes for distributing stones on piles.
In process~$ 1 $ we have $ k_1 = | C_i | $ piles, one for each center, and in process~$ 2 $ we have $ k_2 = | V \setminus C_i | $ piles, one for each non-center node.
Every time our algorithm visits a node $ v $ we do the following:
If $ v $ is a center node, we remove the corresponding pile (and the nodes it contains) from process~$ 1 $.
Similarly, if $ v $ is a non-center node, we remove the corresponding pile (and the nodes it contains) from process~$ 2 $.
After visiting $ v $, the algorithm increases the counters of certain nodes and we add the corresponding number of stones to the piles in processes $ 1 $ and~$ 2 $.

Observe that whenever we remove a pile from process~$ 1 $ this pile always carries the maximum number of stones and between any two removals of piles we have added at most $ 4 h_i n $ stones to the piles.
Therefore, the total number of stones on any pile of process~$ 1 $ (and thus the maximum counter of any center) is $ O (h_i n \log{k_1}) = O (h_i n \log{n}) $ by \Cref{lem:distributing stones on piles}.
By the same reasoning the maximum counter of any non-center node is $ O (h_i n \log{n}) $ as well.
\end{proof}

\begin{claim}\label{lem:time per deletion}
The time for processing a single batch of deletions given by a set $ D $ is $ O (n^{2.5} \sqrt{|D|} \log{n}) $.
\end{claim}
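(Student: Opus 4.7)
The plan is to bound the total work of \Cref{alg:deletion} by charging it (up to logarithmic factors) to two quantities: the total ``affected mass'' $\sum_{v \in R_i} |U_i^v|$ at each level, and the cost of the $O(|C_{\lceil \log h \rceil}|)$ Dijkstra computations at the end. The choice $h = \sqrt{n/|D|}$ is what balances these two contributions.

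First I would prove the key quantitative bound on affected nodes:
\[
\sum_{v \in R_i} |U_i^v| \;=\; O(|D|\,h_i\, n \log n).
\]
Each deleted node $u \in D$ can cause $x \in V$ to be added to $U_i^v$ only if $u$ lies on $\pi_i^v(x,v)$ or $\pi_i^v(v,x)$; by \Cref{lem:bound on congestion} the number of such pairs $(x,v)$ is $O(h_i n \log n)$ for a fixed $u$, and summing over $u \in D$ yields the claim. Computing the sets $U_i^v$ themselves just requires scanning the stored paths and takes $O(|R_i|\cdot n \cdot h_i) = O(n^2 \log n)$ per level.

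Next I would bound, for each $v \in R_i$, the remaining per-$v$ work. The sketch graph $H_i^v$ has $O(|U_i^v|\,n + n)$ edges (dense neighborhoods for the affected, two edges per unaffected), so constructing it and running two Dijkstra calls takes $O(|U_i^v|\,n + n \log n)$. Updating the affected entries of $\delta_i(\cdot,v,\cdot)$ costs $O(|U_i^v|\,n)$. Summed over $v \in R_i$ and using the mass bound, the total level-$i$ cost so far is
\[
O\!\Bigl(n \sum_{v \in R_i} |U_i^v| + |R_i|\, n \log n\Bigr) \;=\; O(|D|\,h_i\, n^2 \log n + n^2 \log^2 n).
\]
For the min-computation on \Cref{line:computing_min}, the hard part is to avoid spending $\Theta(|R_i|)$ per pair. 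I would use the preprocessed sorted lists $L_i(s,t)$ as follows: for each pair $(s,t)$, collect the set $T_i(s,t)$ of centers $v$ whose $\delta_i(s,v,t)$ was actually overwritten (these are exactly the $v$'s with $s \in U_i^v$ or $t \in U_i^v$); compute the minimum over $T_i(s,t)$ directly, and compute the minimum over the un-updated $v$'s by walking $L_i(s,t)$ until encountering the first $v \notin T_i(s,t)$, skipping at most $|T_i(s,t)|$ entries. The combined work is $O\!\bigl(n^2 + \sum_{s,t} |T_i(s,t)|\bigr) = O\!\bigl(n^2 + n \sum_{v \in R_i} |U_i^v|\bigr) = O(|D|\,h_i\,n^2 \log n + n^2)$.

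Summing the per-level bound geometrically over $1 \le i \le \lfloor \log h \rfloor$ gives
\[
O(|D|\, h\, n^2 \log n) \;=\; O(|D|\sqrt{n/|D|}\, n^2 \log n) \;=\; O(n^{2.5}\sqrt{|D|}\log n).
\]
Finally, for paths with more than $h$ edges, \Cref{lem:number of centers} gives $|C_{\lceil \log h \rceil}| = O((n \log n)/h)$; running Dijkstra from and to each such center in $G \setminus D$ costs $O(n^2)$ per call, for a total of $O(n^3 \log n / h) = O(n^{2.5}\sqrt{|D|}\log n)$, and the final min over those centers for each pair $(s,t)$ fits in the same bound. Adding $\delta(s,t) = \min_i \delta_i(s,t)$ contributes only $O(n^2 \log n)$. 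This yields the claimed $O(n^{2.5}\sqrt{|D|}\log n)$ total. The main obstacle is handling \Cref{line:computing_min} efficiently, which is precisely why the $L_i(s,t)$ lists were built during preprocessing.
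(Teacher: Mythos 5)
Your proposal is correct and follows essentially the same route as the paper's proof: the congestion bound of \Cref{lem:bound on congestion} gives $\sum_{v\in R_i}|U_i^v| = O(|D| h_i n \log n)$, the sketch graphs are charged as $|E_i^v| = O(|U_i^v| n + n)$, \Cref{line:computing_min} is handled by taking the minimum over changed entries and scanning $L_i(s,t)$ to the first unchanged one, and the choice $h=\sqrt{n/|D|}$ balances this against the $O((n^3\log n)/h)$ cost of the Dijkstra runs from $C_{\lceil\log h\rceil}$. Your write-up is only slightly more explicit than the paper (e.g.\ in bounding $\sum_{s,t}|T_i(s,t)|$ by $n\sum_{v\in R_i}|U_i^v|$), but there is no substantive difference.
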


\begin{proof}
For every $ 1 \leq i \leq \lfloor \log{h} \rfloor $ and every $ v \in R_i $, we compute the set of affected nodes $ U_i^v $ by iterating over all shortest $ \leq h_i $ hop paths to and from $ v $ determined in the preprocessing.
This takes time $ O (h_i n) $ for fixed $ v $ and $ i $ and thus time $ O (\sum_{i=1}^{\lfloor \log{h} \rfloor} |R_i| h_i n) = O (n^2 \log^2{n}) $ in total.

Constructing all sketch graphs and then running Dijkstra's algorithm on them takes time $ O (\sum_{i=1}^{\lfloor \log{h} \rfloor} \sum_{v \in R_i} (|E_i^v| + n \log{n}) ) $.
For every $ 1 \leq i \leq \lfloor \log{h} \rfloor $ we bound the size of the sketch graph $ H_i^v $ by $ |E_i^v| \leq | U_i^v | n + 2n $ as each node in $ U_i^v $ contributes at most $ n $ edges to all its neighbors and each other nodes contributes at most $ 2 $ edges.
By \Cref{lem:bound on congestion}, each deleted node of $ D $ is contained in $ O (h_i n \log{n}) $ of the $ \leq h_i $ hop shortest paths determined in the preprocessing.
Therefore the total number of nodes affected by the deletions is $ \sum_{v \in R_i} |U_i^v| = O (|D| h_i n \log{n}) $ and thus $ \sum_{v \in R_i} |E_i^v| = O (|D| h_i n^2 \log{n} + |R_i| n) $.
Dijkstra's algorithm in all sketch graphs of layers $ 1 $ to $ \lfloor \log{h} \rfloor $ then takes total time
\begin{multline*}
O \left( \sum_{i=1}^{\lfloor \log{h} \rfloor} \sum_{v \in R_i} (|E_i^v| + n \log{n}) \right) \leq
O \left( \sum_{i=1}^{\lfloor \log{h} \rfloor} (|D| h_i n^2 \log{n} + |R_i| n \log{n}) \right) \leq \\
O \left( \sum_{i=1}^{\lfloor \log{h} \rfloor} (|D| 2^i n^2 \log{n} + n^2 \log{n}) \right) \leq
O (|D| h n^2 \log{n} + n^2 \log^2{n}) \leq O (n^{2.5} \sqrt{|D|} \log{n}) \, .
\end{multline*}

To compute the minimum $ \delta (s, t) = \min_{v \in R_i} \delta_i (s, v, t) $ we do the following.
We first compute the minimum over all values for which $ \delta_i (s, v, t) $ has changed since the preprocessing.
Then, we find the first value of the sorted list $ L_i (s, t) $ for which the value $ \delta_i (s, v, t) $ computed in the preprocessing has \emph{not} changed.
Clearly, the minimum of both values gives $ \min_{v \in R_i} \delta_i (s, v, t) $ and both steps takes time proportional to the number of changed values, which is $ O (|D| h_i n^2 \log{n}) $.

Finally, we bound the time spent on running Dijkstra's algorithm for every node $ v \in C_{\lceil \log{h} \rceil} $ and computing $ \delta_{\lceil \log{h} \rceil} (s, t) $ for every pair of nodes $ s $ and~$ t $.
By the sampling procedure, the size of $ C_{\lceil \log{h} \rceil} $ is $ O ((n \log{n}) / h) $ and thus both of these steps take time $ O ((n^3 \log{n}) / h) = O (n^{2.5} \sqrt{|D|} \log{n}) $.
\end{proof}

\section{Extensions and Additional Results}
\subsection{Negative edge weights}\label{sec:negative_weights}
We extend the reduction of \Cref{lem:decremental_to_fully_dynamic} such that the graph may have negative weights, but no negative cycles, while the decremental algorithm only needs to work with non-negative edge weights.
A potential function $ p $ is a function that assigns a value $ p (v) $ to every node $ v $ such that, for every edge $ (u, v) \in E $, $ w (u, v) + p (u) - p (v) \geq 0 $.
Such a potential exists if and only if the graph contains no negative cycle.
We call $ w_p (u, v) = w (u, v) + p (u) - p (v) $ the modified weight of the edge $ (u, v) $ under the potential function~$ p $.
For any valid potential function the modified edge weights are obviously non-negative and it is well-known that the shortest paths under the original edge weights are the same as under the modified edge weights.
This is known as the ``Johnson transformation''~\cite{Johnson77}.

The reduction now has the following additional steps.
In the preprocessing we construct a graph $ G_q $ that contains an additional node $ q $ and an edge $ (q, v) $ of weight $ 0 $ to every other node $ v $.
On this graph we run the Bellman-Ford algorithm in time $ O (m n) = O (n^3) $ which either detects a negative cycle or computes $ \dist_{G_q} (q, v) $ for every node $ v $.
It is well-known that $ p (v) = \dist_{G_q} (q, v) $ is a valid potential function.
Observe that this potential function remains valid even when we delete edges or nodes from the graph.
Thus, the update procedure of our decremental algorithm computes the shortest paths of the graph correctly.
After running the updates of the decremental algorithm, we undo the potential transformation.
In the fully dynamic algorithm, we then deal with the up to $ 2 \Delta $ inserted nodes by running $ 2 \Delta $ iterations of the Floyd-Warshall algorithm, which by default can deal with negative edge weights.

\subsection{Unweighted graphs}\label{appendix:unweighted}

In unweighted graphs the shortest $ \leq h $ hop paths are identical to the shortest paths with at most $ h $ edges.
Thus, in the preprocessing, we can determine the shortest $ \leq h_i $ hop paths by performing breadth-first search up to depth $ h_i $ in time $ O (n^2) $.
The total time spent for layer $ i $ in the preprocessing is therefore $ O (n^3 / h_i) $.

We slightly alter the reduction  of \Cref{lem:decremental_to_fully_dynamic} to obtain the fully dynamic algorithm by using different value of $ \Delta $ for each layer $ i $.
Specifically, we use layer $ i $ of the decremental algorithm for $ 2 \Delta_i = 2 \sqrt{n} / (h_i \sqrt{\log{n}}) $ updates before we rebuild it.
At every update, besides running each layer of the decremental algorithm we reconstruct shortest paths consisting of more than $ h = \sqrt{n} $ edges in time $ O ((n^3 \log{n}) / h) $ and run the Floyd-Warshall algorithm for at most $ 2 \Delta = 2 \sqrt{n} $ iterations in time $ O (\Delta n^2) $ to handle the latest $ 2 \Delta $ insertions (note that $ \Delta \geq \Delta_i $ for all $ 1 \leq i \leq \lfloor \log{h} \rfloor $).
Thus, the total update time is
\begin{equation*}
O \left( \sum_{1 \leq i \leq \lfloor \log{h} \rfloor} \left( \frac{n^3}{h_i \Delta_i} + \Delta_i h_i n^2 \log{n} + \Delta n^2 \right) + \frac{n^3 \log{n}}{h} \right) = O (n^{2 + \sfrac{1}{2}} \log^{\sfrac{3}{2}}{n}) \, .
\end{equation*}

\subsection{Deterministic algorithm}\label{appendix:deterministic}

In the following we sketch a deterministic fully dynamic APSP algorithm with a worst-case update time of $ O (n^{2 + \sfrac{3}{4}} \log^{\sfrac{3}{4}}{n}) $.
We again design an algorithm that, after some preprocessing, can handle a single batch of up to $ \Delta $ deletions and turn this into a fully dynamic algorithm using the reduction of \Cref{lem:decremental_to_fully_dynamic}.
The parameters of our algorithm are $ \Delta $ and $ h $ and we will explain how to set them optimally in the running time analysis.

In the preprocessing we prepare a data structure from which the shortest paths with at most~$ h $ edges can be reconstructed after any batch of at most $ \Delta $ deletions.
We again keep a congestion counter for each node and in each round visit the node with the maximum counter until \emph{all} nodes have been visited.
When we visit a node~$ v $, we compute the shortest $ \leq h $ hop paths to and from $ v $, denoted by $ \pi^v (x, v) $ and $ \pi^v (v, x) $, by running $ h $ iterations of the Bellman-Ford algorithm in the graph $ G^v $ from which all edges incident to previously visited nodes have been removed.
We then update the congestion counters according to the number of appearances of each node in these paths.
By \Cref{lem:distributing stones on piles}, this order of visiting nodes guarantees that after the preprocessing, for every node $ u $, there are at most $ O (h n \log{n}) $ pairs $ (x, v) $ or $ (v, x) $ such that $ v $ is contained in $ \pi^v (x, v) $ or $ \pi^v (v, x) $, respectively.
Again, the idea is that if we later on delete $ u $ we only have to repair the shortest paths for such pairs.

When we delete a set of nodes $ D $, we first determine, for every node $ v $, the set of affected nodes $ U^v $ consisting of all nodes $ x $ such that $ \pi^v (s, v) $ or $ \pi^v (v, t) $ was destroyed by one of the deletions.
We then build a sketch graph $ H^v $ as follows:
For every affected node $ x \in U^v $ we add edges to all neighbors of $ x $ to $ H^v $ and for every non-affected node $ x \notin U^v $ we add the edge to the successor on $ \pi^v (x, v) $ as well as the edge to the predecessor on $ \pi^v (x, v) $ to $ H^v $.
We then run Dijkstra's algorithm in $ H^v $ to recompute the shortest paths to and from $ v $ with at most $ h $ edges in $ G^v \setminus D $.
Finally, for every pair of nodes $ s $ and $ t $ we find the ``middle node'' $ v $ connecting $ s $ and $ t $ in the shortest way, either by using paths from the preprocessing stage or new paths computed during the update procedure.
This gives us all shortest paths that have at most $ h $ edges.
Using this information we can then compute all shortest paths with more edges deterministically by finding a set of centers that ``hits'' all the shortest paths with at most $ h $ edges using a greedy heuristic.
To the best of our knowledge this method was first described in~\cite{Zwick02}.

\begin{lemma}[\cite{Zwick02}]\label{lem:from_small_hop_to_full_distance}
For every (weighted directed) graph $ G $ and every $ h \geq 1 $, given all shortest paths in $ G $ that have at most $ h $ hops, one can compute $ \dist_G (s, t) $ for all pairs of nodes $ s $ and $ t $ (and the corresponding shortest paths) in time $ O (h n^2 + n^3 \log{n} / h) $.
\end{lemma}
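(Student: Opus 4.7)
The plan is to build, in time $O(h n^2)$, a small set $B \subseteq V$ of ``hub'' nodes such that every shortest path in $G$ with more than $h$ edges passes through some node in $B$, and then to compute full distances via $B$ using Dijkstra in an auxiliary graph whose edge weights are precisely the given $h$-hop distances. The final distances are read off from
\[
\dist_G(s,t) \;=\; \min\!\Bigl(\dist_G^h(s,t),\ \min_{b \in B}\bigl(\dist_G(s,b) + \dist_G(b,t)\bigr)\Bigr),
\]
since any shortest $s$-$t$ path either fits within $h$ edges (and is then part of the input), or contains a hub $b \in B$ that splits it into two shortest subpaths.

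To construct $B$, I would apply the greedy set-cover heuristic to the collection $\mathcal{P}$ of input shortest paths having exactly $h$ edges: there are at most $n^2$ such paths and each contributes $h$ interior node-incidences, giving at most $h n^2$ incidences in total. In each round we pick a node lying on the largest number of still-uncovered paths in $\mathcal{P}$; an averaging argument shows that the chosen node covers at least an $h/n$ fraction of the remaining paths, so $|B| = O((n \log n)/h)$ rounds suffice. Maintaining one counter per node lets the greedy run in total time $O(h n^2)$, since initializing the counters scans each path once and the amortized cost of updating counters when a covered path is removed is bounded by total path length summed over removals.

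For the distance step, let $G^\star$ be the complete directed graph on $V$ with edge weight $w^\star(u,v) = \dist_G^h(u,v)$. I would verify that $\dist_{G^\star}(u,v) = \dist_G(u,v)$: the inequality $\geq$ is immediate since each $G^\star$-edge corresponds to a $G$-path of the same length, while $\leq$ follows by using the hub property to decompose any shortest $G$-path into consecutive segments of at most $h$ edges each, whose lengths agree with the weights of the corresponding $G^\star$-edges. Running two array-based Dijkstras per hub (one on $G^\star$, one on its reverse) yields $\dist_G(b,\cdot)$ and $\dist_G(\cdot,b)$ for every $b \in B$ in total time $O(|B|\cdot n^2) = O(n^3 \log n / h)$. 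Evaluating the displayed minimum over $B$ for every pair $(s,t)$ adds a further $O(n^2 |B|) = O(n^3 \log n / h)$, so the overall cost is $O(h n^2 + n^3 \log n / h)$ as claimed, and the corresponding shortest paths can be recovered by recording predecessors along with each minimization.

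The main subtlety I anticipate is pinning down the hub property cleanly: one needs that \emph{every} long shortest path contains a node of $B$. The cleanest way is to argue that a designated shortest path $\pi$ with more than $h$ edges has its length-$h$ prefix $\pi'$ in $\mathcal{P}$ (since every prefix of a shortest path is itself a shortest path), so the greedy hits $\pi'$ and therefore $\pi$; this requires consistent tiebreaking among multiple shortest paths in the input so that $\pi'$ is indeed the stored path between its endpoints. A secondary technical point is that Dijkstra on $G^\star$ requires non-negative weights, which holds whenever the input weights are non-negative; otherwise one first applies Johnson's reweighting as in \Cref{sec:negative_weights} to reduce to that case.
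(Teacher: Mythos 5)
Your proposal follows essentially the same route as the paper, which gives no proof of its own but cites Zwick and describes exactly this scheme: a greedily constructed hitting (``bridging'') set $B$ of size $O((n\log n)/h)$ for the given short shortest paths, followed by computing the long-range distances through these hubs in $O(|B|\,n^2)$ time; running Dijkstra on the auxiliary complete graph $G^\star$ with weights $\dist_G^h$ instead of on $G$ itself is an inessential variation. The one step you should tighten is the hub property. With one stored path per pair you cannot assume that the stored $s$--$u$ path coincides with the $h$-edge prefix $\pi'$ of a long shortest path $\pi$, and the lemma grants you no tie-breaking discipline on the input (the paper's own preprocessing, for instance, stores shortest $\le h$-hop paths with the \emph{minimum} number of edges, which is not prefix-consistent), so your proof as written adds a hypothesis the statement does not contain. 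Fortunately your construction works unconditionally via a splicing argument: for a pair $(s,t)$ all of whose shortest paths have more than $h$ edges, let $\pi$ be a shortest $s$--$t$ path with the minimum number of edges and let $u$ be its $h$-th vertex. Every shortest $s$--$u$ path must have at least $h$ edges, since one with fewer edges concatenated with the suffix of $\pi$ (after removing the zero-weight cycles this may create) would be a shortest $s$--$t$ path with fewer edges than $\pi$. Hence the stored shortest $\le h$-hop $s$--$u$ path is a genuine shortest path with exactly $h$ edges, so it lies in your collection $\mathcal{P}$ and is hit by some $b\in B$, and splicing it with the suffix of $\pi$ yields $\dist_G(s,b)+\dist_G(b,t)=\dist_G(s,t)$, which is all your displayed minimum needs. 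With this replacement the argument is complete; the greedy size and time bounds, the identity $\dist_{G^\star}=\dist_G$, and the non-negativity caveat for Dijkstra are all fine as you state them.
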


The correctness and the running time follow the same arguments as for the randomized algorithm in \Cref{sec:randomized algorithm}.
However, the deterministic algorithm is slower because we do not know how to extend it to the multilayer approach of the randomized algorithm.
The preprocessing time is dominated by the term $ O (n^3 h) $, which is the time needed for running $ h $ iterations Bellman-Ford algorithm for all nodes.
As in the randomized algorithm, the running time for processing a single batch of deletions is dominated by the time needed to run Dijkstra's algorithm in each sketch graph where the total size of all sketch graphs is $ O (\Delta h n^2 \log{n}) $.
Thus, running Dijkstra's algorithm in all sketch graphs takes total time $ O (\Delta h n^2 \log{n} + n^2 \log{n}) = O (\Delta h n^2 \log{n}) $.
Overall, the update time of this deterministic fully dynamic algorithm is $ O (n^3 h / \Delta + \Delta h n^2 \log{n} + h n^2 + n^3 \log{n} / h + \Delta n^2) $.
By setting $ \Delta = n^{\sfrac{1}{2}} / \log^{\sfrac{1}{2}}{n} $ and $ h = n^{\sfrac{1}{4}} / \log^{\sfrac{1}{4}}{n} $, this is $ O (n^{2 + \sfrac{3}{4}} \log^{\sfrac{3}{4}}{n}) $.

\subsection{Returning shortest paths}\label{sec:returning shortest paths}


The algorithm of \Cref{thm:worst_case_randomized} we have formulated above only returns the distance matrix for all pairs of nodes.
Our decremental algorithm be extended in a straightforward way to return a shortest path matrix that contains, for every pair of nodes $ s $ and $ t $ the first edge $ (s, x) $ on a shortest path from $ s $ to $ t $ by additionally storing with every intermediate distance estimate the first edge on the path yielding the corresponding value.
The Floyd-Warshall algorithm can also compute the shortest path matrix along with the distances.
This matrix provides sufficient information to compute a shortest path between a source and a target in time proportional to the number of edges of this path.
Without additional effort however, we can only show that such an algorithm is correct with high probability against an oblivious adversary who chooses its sequence of updates and queries in advance.
In the following we sketch how to modify our algorithm such that it is correct with high probability against an adaptive online adversary that may adapt its sequence of updates according to the shortest path matrix returned by the algorithm.
Intuitively, this means that we have to avoid that the adversary chooses a clever sequence of updates and queries such that it can identify the random centers picked by our algorithm and alter the graph such that the properties we gained from the random choice of centers will not hold anymore.
Note that this is not an issue for algorithms that merely compute the distance matrix since the exact distances are uniquely defined in every graph and thus the adversary cannot learn anything about our algorithm (and its random choices) by observing its outputs.
If we allow path queries, where the adversary may pick any pair of nodes $ s $ and $ t $ and ask for a shortest path from $ s $ to $ t $, then we do not have this uniqueness property because there might be several paths from $ s $ to $ t $ of minimum weight.

We counter this problem by modifying our algorithm such that it computes lexicographic shortest paths at the cost of an additional factor of $ \log{n} $ in the update time.\footnote{An intuitive alternative for enforcing that shortest paths are unique is to add random perturbation to the edge weights. However we do not know how to use this scheme in our dynamic setting. While one can guarantee that a random perturbation makes shortest paths unique in the initial graph, it is not clear how to obtain this property in all versions of the graph.}
First, we assume without loss of generality that all shortest paths of the graph have the same number of edges.
This can be ensured by adding a sufficiently small penalty to every edge weight.
Now assume an arbitrary but fixed order on the nodes and identify a path with its sequence of nodes.
We say that a path $ \pi_1 $ is lexicographically smaller than a path $ \pi_2 $ if either $ \pi_1 $ is a prefix of $ \pi_2 $ or $ \pi_1 = \pi \circ v_1 \circ \pi_1'' $ and $ \pi_2 = \pi \circ v_2 \circ \pi_2'' $ where $ v_1 < v_2 $.
The latter condition means that at the first position where $ \pi_1 $ and $ \pi_2 $ differ, the node of $ \pi_1 $ is a smaller than the one of $ \pi_2 $.
The lexicographic shortest path from $ s $ to $ t $ is the path that among all shortest paths from $ s $ to $ t $ is smallest according to the lexicographic order.
To compare paths lexicographically we use the \emph{minimum-index tree structure} (short: MITS) of Cabello, Chambers, and Erickson~\cite{CabelloCE13}, which can be implemented using Euler-tour trees~\cite{HenzingerK99, Tarjan97} or self-adjusting top trees~\cite{TarjanW05}.
This minimum index tree structure allows adding and removing edges to and from a tree with given root $ s $.
Given two nodes $ u $ and $ v $ as its input it can answer the following query: is the path from $ s $ to $ u $ in the tree lexicographically smaller than the path from $ s $ to $ v $?
All operations of this data structure take time $ O (\log{n}) $.
Cabello, Chambers, and Erickson~\cite{CabelloCE13} explain how use the MITS to adapt Dijkstra's algorithm for computing lexicographic shortest paths at the cost of an additional factor of $ O (\log{n}) $ in the running time.
In a similar way the Bellman-Ford algorithm can be modified to compute the lexicographic $ \leq h $ hop shortest paths.

We now explain how to modify our algorithm to compute, for every pair of nodes $ s $ and $ t $, the first edge $ (s, x) $ on the lexicographic shortest path from $ s $ to $ t $ in $ G \setminus D $.
In the preprocessing algorithm the only modification is to use the modified Bellman-Ford algorithm to compute the lexicographic $ \leq h_i $ hop shortest paths.
In the update procedure we first run the algorithm for computing the distance matrix completely so that we know the value of $ \dist_{G \setminus D} (s, t) $ for all pairs of nodes $ s $ and~$ t $.
We then initialize an MITS for every node $ s $ and add to it the following paths computed by our algorithm.
\begin{enumerate}
\item For every $ 1 \leq i \leq \lfloor \log{h} \rfloor $ and every $ v \in R_i $ we add the shortest $ \leq h_i $ hop path $ \pi_i^v (s, v) $ from the preprocessing to the MITS if the weight of $ \pi_i^v (s, v) $ is equal to $ \dist_{G \setminus D} (s, v) $.
\item For every $ 1 \leq i \leq \lfloor \log{h} \rfloor $ and every $ v \in R_i $ such that $ s \in U_i^v $, we add the lexicographic shortest path $ \pi $ from $ s $ to $ v $ in $ H_i^v $ to the MITS if $ \pi $ has at most $ h_i $ edges and the weight of $ \pi $ is equal to $ \dist_{G \setminus D} (s, v) $.
\item For every $ v \in R_{\lceil \log{h} \rceil} $ we add the lexicographic shortest path from $ s $ to $ v $ in $ G \setminus D $ to the MITS.
\end{enumerate}
Using similar arguments as in \Cref{sec:correctness} it follows that (1) we only add lexicographic shortest paths to the MITS leading to a tree structure and (2) for every lexicographic shortest path $ \pi $ from $ s $ to some node $ t $ there is some node $ v $ on $ \pi $ such that the subpath from $ s $ to $ v $ is contained in the MITS of $ s $.
The running time for these insertions of paths to the MITS's can be bounded as follows:
As, for each $ 1 \leq i \leq \lfloor \log{h} \rfloor $, $ R_i = O ((n \log{n}) / h_i) $ and each inserted path has at most $ h_i $ edges, the total time for the insertions in step 1 is $ O (n^2 \log^3{n}) $.
The total time for step 2 can be bounded by recalling that the total number of affected nodes is $ \sum_{v \in R_i} | U_i^v | = O (\Delta h_i n \log{n}) $, yielding a total time of $ O (\Delta h_i n^2 \log^2{n}) $.
Since $ | R_{\lceil \log{h} \rceil} | = O ((n \log{n}) / h) $ the total time for step 3 is $ O ((n^3 \log^2{n}) / h) $.

To obtain the lexicographic shortest paths of $ G \setminus D $ for every pair of nodes $ s $ and $ t $ (or rather the first edge of this path) we do the following:
We let $ X (s, t) $ be the set of nodes $ v $ satisfying one of the following three conditions:
\begin{enumerate}
\item $ v \in R_i $ for some $ 1 \leq i \leq \lfloor \log{h} \rfloor $ and $ \pi_i^v (s, v) + \pi_i^v (v, t) = \dist_G (s, t) $ ,
\item $ v \in R_i $ such that $ s \in U_i^v $ or $ t \in U_i^v $ for some $ 1 \leq i \leq \lfloor \log{h} \rfloor $ and $ \dist_{H_i^v} (s, v) + \dist_{H_i^v} (v, t) = \dist_G (s, t) $, or
\item $ v \in R_{\lceil \log{h} \rceil} $ and $ \dist_{G \setminus D} (s, v) + \dist_{G \setminus D} (v, t) = \dist_G (s, t) $.
\end{enumerate}
Among the nodes in $ X (s, t) $ we now want to find the node $ v $ whose lexicographic shortest path from $ x $ to~$ v $ is smallest in $ G \setminus D $ and output the first edge on this path.
This can be done in time $ O (| X (s, t) | \log{n}) $ using the MITS of $ s $ that we have prepared above.
By our arguments from above, the total size of all these sets is bounded by $ O (|D| h n^2 \log{n} + (n^3 \log{n}) / h) $.
Using $ h = \sqrt{n / |D|} $ we thus obtain a running time of $ O (n^{2.5} \sqrt{|D|} \log{n}) $.

\section{Conclusions}

In this paper we considered the fully dynamic APSP problem with a worst-case update time guarantee of $ O (n^{2+\sfrac{2}{3}} \log^{\sfrac{4}{3}}{n}) $.
Our algorithm is simple and independent of any other sophisticated algorithms. 
Our current knowledge of lower bounds for this problem seems quite rudimentary. 
A natural barrier for the current approaches seems to be $\Omega(n^{2+\sfrac{1}{2}})$. 
One reason for this barrier is that the only way we know to deal with insertions is to use the naive approach, in which for every insertion (since the last time the data structure was reconstructed) in every update we compute a SSSP tree and recompute all pair-wise distances in these trees.
This naive approach sets a barrier of $\Omega(n^{2+\sfrac{1}{2}})$.
A natural question is on the existence of fully dynamic APSP algorithm that meet this barrier or prove impossibility results.

For unweighted graphs, our upper bound indeed meets this barrier. Weighted graphs seem to be inherently harder: for example, extending the algebraic techniques of Sankowski \cite{Sankowski05} to weighted graphs is an open question.  Our techniques for weighted graphs incur a cost related to computing single source $h$-hop shortest paths: the best known is time $\tilde O(n^2 h)$ for weighted graphs, and time $\tilde O(n^2)$ for unweighted graphs.
If $h$-hop shortest paths could be solved in time $\tilde O(n^2)$ for weighted graphs, then our techniques would immediately provide  an improved results that would meet the natural $\Omega(n^{2+\sfrac{1}{2}})$ barrier.

We believe that it would be interesting to also explore the potential opposite connection: could hardness of $h$-hop shortest paths in weighted graphs imply lower bounds for dynamic shortest paths in weighted graphs? 
Another interesting direction is to derandomize our algorithm or prove an existential gap between randomized and deterministic algorithms.

\printbibliography[heading=bibintoc] 

\end{document}